\title{%
	\vspace*{-4ex}%
	On exact counting and quasi-quantum complexity%
}
\date{25 September 2015 \vspace*{-2ex}}
\author{Niel de Beaudrap\thanks{niel.debeaudrap@gmail.com} \\ Department of Computer Science, University of Oxford}
\newcommand\ie{\emph{i.e.}}
\newcommand\eg{\emph{e.g.}}
\newcommand\etal{\emph{et al.}}
\newcommand\YES{\emph{yes}}
\newcommand\NO{\emph{no}}
\theoremstyle{definition}
\newtheorem{definition}{Definition}
\newtheorem*{definition*}{Definition}
\theoremstyle{plain}
\newtheorem{theorem}{Theorem}
\newtheorem{lemma}[theorem]{Lemma}
\newtheorem{proposition}[theorem]{Proposition}
\newtheorem*{proposition*}{Proposition}
\newtheorem*{claim*}{Claim}
\newtheorem{corollary}{Corollary}[theorem]
\DeclareMathOperator\poly{poly}
\DeclareMathOperator\diag{diag}
\DeclareMathOperator\GL{GL}
\let\P\relax
\newcommand\DeclareComplexityClass[2]{\newcommand#1{\ensuremath{\mathsf{#2}}}}
\DeclareComplexityClass\co{co}
\DeclareComplexityClass\Ceq{C_{\texttt=}}
\DeclareComplexityClass\Mod{Mod}
\DeclareComplexityClass\Post{Post}
\DeclareComplexityClass\P{P}
\DeclareComplexityClass\FP{FP}
\DeclareComplexityClass\mU{U}
\DeclareComplexityClass\UP{UP}
\DeclareComplexityClass\mGL{GL}
\DeclareComplexityClass\BPP{BPP}
\DeclareComplexityClass\WPP{WPP}
\DeclareComplexityClass\LWPP{LWPP}
\DeclareComplexityClass\LPWPP{LPWPP}
\DeclareComplexityClass\SPP{SPP}
\DeclareComplexityClass\NP{NP}
\DeclareComplexityClass\BQP{BQP}
\DeclareComplexityClass\NQP{NQP}
\DeclareComplexityClass\QMA{QMA}
\DeclareComplexityClass\EQP{EQP}
\DeclareComplexityClass\ZQP{ZQP}
\DeclareComplexityClass\PP{PP}
\DeclareComplexityClass\AC{AC}
\DeclareComplexityClass\NC{NC}
\DeclareComplexityClass\GapP{GapP}
\DeclareComplexityClass\AWPP{AWPP}
\newcommand\GLP[1][]{\ensuremath{\mathsf{GLP}_{#1}}}
\newcommand\UnitaryP[1][]{\ensuremath{\mathsf{UnitaryP}_{\!#1}}}
\newcommand\AffineP[1][]{\ensuremath{\mathsf{AffineP}_{\!#1}}}
\newcommand\C{\mathbb{C}}
\newcommand\N{\mathbb{N}}
\newcommand\Z{\mathbb{Z}}
\renewcommand\vec\mathbf
\newcommand\herm{^\dagger}
\newcommand\ox{\otimes}
\newcommand\x{\times}
\newcommand\sox[1]{^{\otimes #1}}
\newcommand\sur[1]{^{(#1)}}
\let\subset\subseteq
\let\oldepsilon\epsilon
\let\epsilon\varepsilon
\let\varepsilon\oldepsilon
\let\le\leqslant
\let\ge\geqslant
\newcommand\idop{\mathbbm{1}}
	\newcommand\ket[1]{\left| #1 \right\rangle\@ifnextchar\bra{\mspace{-4mu}}{}}
	\newcommand\bra[1]{\left\langle #1 \right|}
	\newcommand\bracket[2]{\left\langle #1 \left| #2 \right\rangle\right.}
\begin{document}

\maketitle

\begin{abstract}
	We present characterisations of ``exact'' gap-definable classes, in terms of indeterministic models of computation which slightly modify the standard model of quantum computation.
	This follows on work of Aaronson~\cite{Aaronson-2005}, who shows that the counting class $\PP$ can be characterised in terms of bounded-error ``quantum'' algorithms which use invertible (and possibly non-unitary) transformations, or postselections on events of non-zero probability.
	Our work considers similar modifications of the quantum computational model, but in the setting of exact algorithms, and algorithms with zero error and constant success probability.
	We show that the gap-definable~\cite{FFK-1994} counting classes which bound exact and zero-error quantum algorithms can be characterised in terms of ``quantum-like'' algorithms involving nonunitary gates, and that postselection and nonunitarity have equivalent power for exact quantum computation only if these classes collapse.
 \end{abstract}

\section{Introduction}

The relationship between quantum computation (as captured by the class \BQP), and complexity classes involving classical nondeterminism, is still unclear.
It is known that $\BPP \subset \BQP$; is this containment strict?
Furthermore, it is not known whether $\NP \subset \BQP$ or $\BQP \subset \NP$ hold, and it is conjectured (see for example Refs.~\cite{AA-2009,AA-2011}) that neither containment holds: can this be shown?
If $\NP$ and $\BQP$ are indeed incomparable, is there any natural relationship between quantum computation and nondeterministic Turing machines?
These problems are expected to be very difficult.
In fact, the best known lower bounds on quantum complexity classes are the classes $\P$ and $\BPP$, and the best known upper bounds are \emph{gap-definable classes}~\cite{FFK-1994} (described below), most of which also do not have a well-understood relationship to $\NP$.

To look for complexity-theoretic lower bounds for \BQP, one may consider exact or zero-error quantum algorithms.
An \emph{exact} quantum algorithm is one in which the output bit is in one of the pure states $\ket{0}$ or $\ket{1}$, according to whether the input is a \NO\ or a \YES\ instance.
For zero-error quantum algorithms, we allow a bounded probability that the algorithm fails to decide between \YES\ and \NO\ instances (indicated by a measurement on some non-output qubit), but require otherwise that the algorithm indicates the correct answer.
The sets of problems which can be decided by such algorithms (represented by polynomial-time uniform circuit families), using finite gate sets, are the classes $\EQP$~\cite{BV-1997} and $\ZQP$~\cite{Nishino-2002,BW-2003} respectively.
The class \BQP\ is the class of problems which can be decided by such circuits, with bounded error; we then have $\EQP \subset \ZQP \subset \BQP$.

One might hope that the exactness constraint might yield an elegant theory of such algorithms, as is the case for other problems in quantum informatics involving one-sided error, such as the zero-error classical capacity of quantum channels~\cite{MdA-2005,CLMW-2010,Beigi-2010}, and 2-QUANTUM-SAT~\cite{Bravyi-2006,CCDJZ-2011,dBOE-2010} (in contrast to 2-LOCAL-HAMILTONIAN~\cite{KKR-2006}).
However, the study of $\EQP$ seems to be hampered by the restriction to finite gate-sets, which makes it difficult to produce exact algorithms for natural problems, except for some in the oracle model (\eg~\cite{Ambainis-2013,AIS-2013}).
One may attempt to overcome this difficulty by considering a more general model of computation in which we only require the gate-coefficients to be efficiently computable as part of the uniformity condition of the circuit family, and require exact decision. (Some of our results consider such models.)
Nevertheless, as there seems to be renewed interest in exact quantum query algorithms~\cite{MJM-2015}, one may hope that the theory of exact and zero-error algorithms with finite gate-sets may yet see some progress in the near future.

Each of $\EQP$, $\ZQP$, and $\BQP$ are bounded above by \emph{gap-definable classes}~\cite{FFK-1994}, which distinguish between \YES\ and \NO\ instances via differences between the number of accepting/rejecting branches of nondeterministic Turing machines (NTMs).
Figure~\ref{fig:containments} presents the best previously upper known bounds for these quantum classes, which are $\EQP \!\!\:\subset\!\!\: \LWPP$,\; $\ZQP \!\!\:\subset\!\!\: {\Ceq\P \!\!\;\cap\!\!\; \co\Ceq\P}$,\; and $\BQP \!\!\:\subset\!\!\: \AWPP$, which follow from Refs.~\cite{FGHP-1999,FR-1999}.\footnote{%
	The bound on $\ZQP$ follows from $\ZQP \subset \NQP \cap \co\NQP$ (Proposition~\ref{prop:ZQPupperBound} on page~\pageref{prop:ZQPupperBound}).
	A problem is in $\NQP$ if there are quantum circuits which yield an output of $1$ with non-zero probability precisely for the \YES\ instances (in analogy to a probabilistic formulation of $\NP$).
	Fenner~\etal~\cite{FGHP-1999} show that $\NQP = \co\Ceq\P$.
}
\begin{figure}
\begin{minipage}{0.5\textwidth}
\small
\begin{tikzpicture}[yscale=0.75, xscale=1.4, every node/.style={draw=none, inner sep=0pt, outer sep=3pt, rectangle}]
	\node	(P) at (0,0) {\P};
	\node (EQP) at (2,2) {\EQP};
  \node (ZQP) at (2,5) {\ZQP};
  \node (BQP) at (2,7) {\BQP};
  \node	(UP)	at (0,2) {\UP};
  \node (SPP)	at (0,3) {\SPP};
  \node (LWPP) at (0,6) {\LWPP};
  \node (DCeqP) at (0,9) {$\Delta \Ceq\P$};
  \node (WPP) at (0,7) {\WPP};
  \node (coCeqP) at (0,11) {$\co\Ceq\P$};
  \node [anchor=north east] at (coCeqP.north west) {$\NQP = {}\!\!\!\!\!\:$};
  \node (AWPP) at (1,9) {\AWPP};
	\node (PP) at (1,11) {$\PP \mathrlap{{}=\BQP_{\GL} \!\!\:= \Post\BQP}$};
	\draw (EQP) -- (ZQP) -- (BQP);
	\draw (P) -- (UP) -- (SPP) -- (LWPP) -- (WPP) -- (DCeqP) -- (coCeqP); 
	\draw [shorten >=-4pt] (P) -- (EQP);
	\draw (EQP) -- (LWPP);
	\draw (WPP) -- (AWPP);
	\draw (ZQP) -- (DCeqP);
	\draw (ZQP) -- (BQP);
	\draw (DCeqP) -- (PP);
	\draw (BQP) -- (AWPP);
	\draw (AWPP) -- (PP);
	\begin{pgfonlayer}{background}
		\node [fill=gray!20!white, fit=(coCeqP)(SPP)(AWPP)(PP), inner xsep=15pt, inner ysep=-26pt] {};
	\end{pgfonlayer}
\end{tikzpicture}
~
\end{minipage}
\hspace*{-0.1\textwidth}
\hfill
\begin{minipage}{0.54\textwidth}
\vspace*{11ex}
\caption{%
	The previously known relations between gap-definable classes~\cite{FFK-1994} (contained in the grey box), and quantum computational complexity classes.
	The left-most column consists of exact counting classes, including \UP\ (problems decided by NTMs according to whether the number of accepting branches is zero or one) and \P\ (which further requires that there be a total of one branch) for context.
	The classes $\SPP \subset \LWPP \subset \WPP \subset \co\Ceq\P$ distinguish between \NO\ and \YES\ instances $x \in \{0,1\}^\ast$ according to whether the acceptance gap is zero or positive.
	For \WPP, if the gap is positive, we require that it equal some $h(x) > 0$ which is computable from $x$ in deterministic time $O(\poly |x|)$.
	For \LWPP, we further require that $h(x)$ depend only on $|x|$, and for \SPP\ we require that $h$ be constant.
	Here, $\Delta\Ceq\P := {\Ceq\P \cap \co\Ceq\P}$.
	The equalities are results of Fenner~\etal~\cite{FGHP-1999} and Aaronson~\cite{Aaronson-2005}, and the containments $\EQP \subset \LWPP$ and $\BQP \subset \AWPP$ are proven by Fortnow and Rogers~\cite{FR-1999}.
}
\label{fig:containments}
\end{minipage}
\vspace*{-2.75ex}
\end{figure}
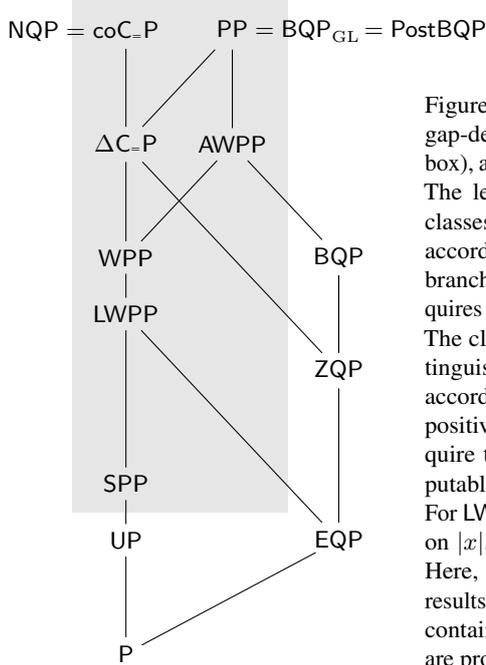
For the class \Ceq\P, we require there be an NTM for which this gap is precisely zero for \YES\ instances, and strictly positive for \NO\ instances.
For $\Delta\Ceq\P := {\Ceq\P \cap \co\Ceq\P}$, we require that there \emph{also} be an NTM with the gap conditions reversed: these represent a pair of zero-error randomised algorithms in which exactly one has any bias towards accepting a given input.
(Whichever algorithm has a bias, indicates whether the input is a \YES\ or \NO\ instance.)
For \LWPP, we further require that these NTMs have an acceptance gap which is either zero, or an efficiently computable function of the length of the input; the class \AWPP\ is a sort of bounded-error version of \LWPP~\cite{Fenner-2003}.

These gap-definable classes may seem quite technical on first encounter.
Still, given that \BPP\ can be characterised in terms of the number of just the accepting branches of NTMs,
this approach to relate quantum computation to nondeterministic complexity may seem promising if one considers the cancellation of amplitudes (\ie~destructive interference) to be the key distinction between quantum computation and randomised computation.
However, as $\UP$ is a subset of each of these gap-definable classes, we would have $\NP \subset \BQP$ by Valiant--Vazirani~\cite{VV-1986} if any of the quantum classes were equal to such a gap-definable class.
Thus we might expect these bounds to be loose.

How might we explore the relationship between gap-functions and quantum computation, to find better upper bounds on the quantum complexity classes?
One approach is to modify the standard quantum computational model, characterise the power of the modified model, and consider the role played by the modifications in the characterisation.
For instance, if one substitutes the real or complex numbers in quantum operations by elements of the ring $\Z_k$ for $k$ a prime power, one obtains new models of computation~\cite{Beaudrap-2014} in which exact or bounded-error computation both efficiently decide any $L \in \Mod_k\P$~\cite{BGH-1990}, classes which again contain $\UP$.
This suggests that the complex amplitudes in quantum computation serve to \emph{constrain} the power of the computational model as much as to empower it, assuming that $\UP \not\subseteq \EQP$.
Another modification, considered by Aaronson~\cite{Aaronson-2005}, is bounded-error ``quantum'' algorithms involving invertible (possibly non-unitary) transformations.
He shows that for bounded-error computation, this is equivalent to allowing the algorithm to \emph{postselect} (condition its output on measurement events of non-zero probability), and suffices to decide problems in $\PP$ with bounded error.
This indicates the computational power of postselection, and suggests that any attempt to separate $\BQP$ from $\PP$ must somehow account for the unitarity of the operations in quantum algorithms.

\vspace*{-1ex}
\paragraph{Results.}

Extending Aaronson~\cite{Aaronson-2005}, we characterise the power of invertible transformations and of postselection for exact and zero-error quantum algorithms.
We first describe a potentially tighter bound $\EQP \subset \LPWPP$ by accounting for the finite gate set of \EQP-type algorithms.
(This again suggests that finite gate-sets in the zero-error setting may be a strong restriction.)
We then characterise \LPWPP\ and \LWPP\ in terms of exact ``quantum-like'' algorithms, using invertible gates 
(from an infinite but polynomial-time specifiable~\mbox{\cite[\S3.3\,A]{Beaudrap-2014}} gate-set in the case of \LWPP).
We also show that $\Delta\Ceq\P$ can be characterised either through zero-error quantum-like algorithms using invertible gates, or exact quantum algorithms using postselection.
(Thus, postselection and non-unitary operations have equivalent power for exact algorithms only if $\LWPP = \Delta\Ceq\P$.)

Our results demonstrate that the exact gap-definable classes \LWPP\ and $\Delta\Ceq\P$ (and the new class \LPWPP) can be described simply in terms of \emph{quasi-quantum} computation, in which we allow invertible non-unitary gates.
At the same time, this shows that the exactness condition \emph{per se} does not represent a barrier to exact quantum algorithms, as these exact quantum-like classes contain problems of interest (such as \textrm{\small GRAPH ISOMORPHISM}~\cite{AK-2002}).
Our results also show that unitarity is a significant constraint on quantum algorithms even in the exact setting, where the Born rule plays no role.

\vspace*{-1ex}
\paragraph{Structure of the paper.}
Section~2 presents some preliminaries on quantum complexity and counting classes, defining all of the complexity classes which we use.
Section~3 contains a few technical results in gap-function complexity.
Section~4 presents our two characterisations of $\Delta\Ceq\P$, while Section~5 presents our results on exact quantum-like algorithms with invertible gates.
We conclude in Section~6 with some remarks about other gap-definable classes, and open questions.

\section{Preliminaries}
\label{sec:preliminaries}

\subsection{Definitions in counting complexity}

We begin with some basic terminology relating to the accepting and rejecting branches of poly-time nondeterministic Turing machines (NTMs).

We consider NTMs in \emph{normal form}~\cite{FFK-1994}, in which each nondeterministic transition selects from two possible transitions, and in which the number of nondeterministic transitions made in any computational branch on input $x \in \{0,1\}^\ast$ is $q(|x|)$, for some fixed polynomial $q$.
We may then represent a computational branch of the NTM by a boolean string $b \in \{0,1\}^{q(|x|)}$, which we call a \emph{branching string} of the NTM.

Probabilities can be modelled by NTMs by counting the number of branches (conceived of as arising from unbiased coin-flips) which terminate either in the ``accept'' or ``reject'' state.
To instead model amplitudes which can destructively interfere, we represent negative contributions to amplitudes by rejecting branches of an NTM, and positive contributions by accepting branches; the difference in the numbers of these represent the accumulated amplitude.
This motivates studying complexity classes defined in terms of \emph{gap functions}~\cite{FFK-1994}:

\begin{definition}
  \GapP\ is set of integer-valued functions $g$ on finite input-strings, for which there is a poly-time NTM $\mathbf N$, such that the difference between the number of accepting and rejecting branches of $\mathbf N$ is $g(x)$ for each input $x$.
  We call $g$ the \emph{gap function of $\mathbf N$}.
\end{definition}
\noindent

For $g$ the gap function of an NTM in normal form, $g(x)$ is always even; however, by Ref.~\cite[Lemma~4.3]{FFK-1994}, there will be a gap function $h \in \GapP$ of some other NTM not in normal form, such that $g(x) = 2h(x)$.
While we present our results for NTMs $\mathbf N$ in normal form, we take the liberty of considering ``half-gap functions'' $h(x) = \tfrac{1}{2}g(x)$, for $g$ the gap function of $\mathbf N$.

A \emph{counting class} is a class of languages which can be decided in terms of the number of accepting or rejecting branches of a poly-time NTM.
The class $\Ceq\P$ is sometimes referred to as the ``exact counting'' class, as it may be defined in terms of polytime NTMs, for which an input is a \YES\ instance if and only if the number of accepting branches is equal to a given polytime-computable function.
In this article, we call \emph{any} complexity class an exact counting class if it distinguishes between \YES\ and \NO\ instances according to whether the number of its accepting branches, or its gap function, equals some efficiently computable function.
Following Ref.~\cite{FFK-1994}, we may define the exact counting classes relevant to our results as follows:
\begin{definition} 
	We define the classes $\Ceq\P$, $\co\Ceq\P$, $\WPP$, $\LWPP$, and $\SPP$ as follows.
	\begin{itemize}
	\item 
		$\Ceq\P$ is the class of languages $L$ for which there is $g \in \GapP$, such that $x \in L$ if and only if $g(x) = 0$; $\co\Ceq\P$ is the class of problems for which there is $g \in \GapP$, such that $x \in L$ if and only if $g(x) \ne 0$.
		We denote $\Delta\Ceq\P := \Ceq\P \cap \co\Ceq\P$.
	\item
		$\WPP$ is the class of $L \in \Delta\Ceq\P$ for which, in addition to the above (for $g$ the gap function of an NTM in normal form), there exists a \emph{poly-time computable} integer function $h$, such that either $x \notin L$ and $g(x) = 0$ or $x \in L$ and $g(x) = 2 h(x) \ne 0$.
	\item
		$\LWPP$ is the class of $L \in \WPP$ for which, in addition to the above, we may require that $h(x)$ depend only on $|x|$; we then say that $h$ is \emph{length-dependent}.
	\item
		$\SPP$ is the class of $L \in \LWPP$ for which, furthermore, we may require $h(x) = 1$ (or more generally $h(x) \in O(\poly |x|)$ by~\cite[Theorem~5.9]{FFK-1994}).
	\end{itemize}
\end{definition}
Precisely evaluating gap-functions is difficult in general (for example, simple transformations of NTMs suffice to show that $\UP \subset \SPP$ and $\NP \subset \co\Ceq\P$~\cite{FFK-1994}), so exact counting classes such as these may be considered quite powerful.

\subsection{Upper bounds on quantum complexity}

While $\Ceq\P$ is a subject of some interest in counting complexity~\cite{BGH-1990,TO-1992,FFK-1994,FGHP-1999,STT-2005}, the classes \LWPP\ and \WPP\ appear to be of more technical importance~\cite{FR-1999,ST-2006}, largely for their relationships to quantum complexity.
In each case, the polytime computable function $h$ represents a sort of ``normalising factor'' for the gap-function $g$, and we distinguish between \NO\ and \YES\ instances according to whether $\tfrac{1}{2}g(x)/h(x) = 0$ or $\tfrac{1}{2}g(x)/h(x) = 1$.
This intuition motivates the following ``approximate counting'' class corresponding to \LWPP\ and \WPP~\cite{FFKL-1993}:
\begin{definition}
	\label{def:AWPP}
	For $\epsilon > 0$, let $a \approx_\epsilon b$ if and only if $|a - b| \le \epsilon$.
	Then \AWPP\ is the class of languages $L$ such that, for any $\epsilon \in 2^{-O(\poly n)}$, there is a gap-function $g \in \GapP$ and a poly-time computable length-dependent function $h$, such that for all inputs $x$ we have $0 \le g(x) \le h(x)$, and either $x$ is a \NO\ instance and $g(x)/h(x) \approx_{\epsilon} \!0$ or $x$ is a \YES\ instance and $g(x)/h(x) \approx_{\epsilon} \!1$.
	Furthermore, we obtain the same class if there are such gap-functions $g$ even if we restrict to $\epsilon = \tfrac{1}{3}$ and power functions $h(x) = 2^{t(|x|)}$~\cite{Fenner-2003}.
\end{definition}
\noindent
We consider quantum circuits whose gates involve only algebraic coefficients (without loss of generality~\cite{ADH-1997}), given exactly as rational combinations of products of independent algebraic numbers, thus admitting an efficient algorithm for deciding equality.
Representing amplitudes by ``normalised'' gap functions provides intuition for the following:
\begin{proposition}[{Fortnow and Rogers~\cite{FR-1999}}]
	\label{prop:classicUpperBounds}
	$\EQP \subset \LWPP$ and $\BQP \subset \AWPP$.  
\end{proposition}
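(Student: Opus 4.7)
The plan is to express the acceptance probability of a quantum circuit as a ratio $p(x) = G(x)/D(|x|)$, where $G \in \GapP$ and $D$ is a positive integer computable in time $O(\poly |x|)$ and depending only on $|x|$: the \EQP\ constraint $p(x) \in \{0,1\}$ then directly exhibits $L \in \LWPP$, while the \BQP\ promise $p(x) \in [0,\tfrac{1}{3}] \cup [\tfrac{2}{3},1]$ exhibits $L \in \AWPP$. Each circuit $C$ uses algebraic entries lying in a fixed number ring $R$; clearing a common denominator, I would assume every gate entry is in $\tfrac{1}{N}R$ for a fixed positive integer $N$, with $R$ equipped with a known integral $\Q$-basis in which arithmetic is poly-time computable.

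For a uniform circuit $C = U_T U_{T-1} \cdots U_1$ on $n = n(|x|)$ qubits with $T = T(|x|)$ gates, Feynman expansion gives
\[ \bra{y} C \ket{0^n} \;=\; \sum_{z_1, \ldots, z_{T-1} \in \{0,1\}^n} \prod_{t=1}^{T} \bra{z_t} U_t \ket{z_{t-1}}, \]
where $z_0 = 0^n$ and $z_T = y$. Each summand lies in $\tfrac{1}{N^T}R$. I would construct a polynomial-time NTM that branches over $(z_1, \ldots, z_{T-1}, y)$ and a handful of sign and basis-coordinate bookkeeping bits, so that its gap equals $N^T$ times a chosen coordinate of $\bra{y}C\ket{0^n}$ in the $\Q$-basis of $R$. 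Combining these coordinate gap functions via closure of \GapP\ under polynomial-length sums, products, and squares~\cite{FFK-1994} yields $G \in \GapP$ with $G(x) = D(|x|) \cdot p(x)$, where $p(x) = \sum_{y \in A} \bigl|\bra{y}C\ket{0^n}\bigr|^2$ for the accepting output set $A$, and $D(|x|)$ is a fixed positive integer depending only on $N$, $T(|x|)$, and the structure of $R$.

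For $L \in \EQP$, $p(x) \in \{0,1\}$ so $G(x) \in \{0, D(|x|)\}$, and setting $h(x) = D(|x|)/2$ (absorbing a factor of $2$ if necessary to meet the half-gap convention noted after the \GapP\ definition) certifies $L \in \LWPP$, since $D$ is a length-dependent poly-time computable integer. For $L \in \BQP$, $G(x)/D(|x|)$ lies within $\tfrac{1}{3}$ of either $0$ or $1$, matching Definition~\ref{def:AWPP} with $\epsilon = \tfrac{1}{3}$ and the power function $h(x) = D(|x|)$. The main technical obstacle is the NTM construction in the second step: one must track the signed basis-coordinates of a product of elements of $\tfrac{1}{N}R$ along a Feynman path, and then accumulate the squared magnitudes, which for a general algebraic ring $R$ itself involves a sum over conjugate-pair basis multiplications, into a single \GapP\ function. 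This is routine given closure of \GapP\ under products and sums~\cite[\S1]{FFK-1994}, but demands careful bookkeeping especially in the exact \EQP\ case, where no rounding of intermediate amplitudes is available.
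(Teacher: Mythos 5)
The paper offers no proof of this proposition---it is quoted directly from Fortnow and Rogers---but your reconstruction is precisely the argument that reference uses (and that the paper later glosses in proving Proposition~\ref{prop:LPWPPunitaryBound}, via the common denominator $M$ of the gate coefficients and a length-dependent function computing its powers): Feynman path sums turn amplitudes into \GapP\ functions over a common denominator $N^{T}$, and the exactness or bounded-error promise then reads off membership in \LWPP\ or \AWPP. The one point to watch is that in the \BQP\ half your identity $G(x)=D(|x|)\,p(x)$ holds exactly only after reducing to rational gate entries or after rounding the irrational basis elements of $R$ to exponential precision (which the $\epsilon$-slack in Definition~\ref{def:AWPP} absorbs), whereas in the \EQP\ half it is forced by the $\Q$-linear independence of the integral basis together with $p(x)\in\{0,1\}$, exactly as you use it.
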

\noindent
We may refine the upper bound on $\EQP$ by defining an intermediate class to $\SPP$ and $\LWPP$, in which the length-dependent function evaluates powers of some fixed integer, following the characterisation of \AWPP\ by Fenner~\cite{Fenner-2003} described in Definition~\ref{def:AWPP}:
\begin{definition} 
  $\LPWPP$ is the class of problems in $\LWPP$ for which there exists an integer $M \ge 1$, a poly-time computable length-dependent function $h$, and a gap-function $g \in \GapP$, such that for all inputs $x$ we have $h(x) = M^t$ for some $t \ge 0$; and either $x$ is a \NO\ instance and $g(x) = 0$, or $x$ is a \YES\ instance and $\tfrac{1}{2}g(x) = h(x)$.
\end{definition}
\noindent
We have $\SPP \subset \LPWPP$ by definition (take $M = 1$ or $t = 0$).
As is usual in complexity theory, it is not obvious whether either of the containments $\SPP \subseteq \LPWPP \subseteq \LWPP$ are strict.
It is quite plausible that $\LPWPP \ne \LWPP$; though perhaps the restriction in the definition of $\LPWPP$ of the half-gap functions $h$ to perfect powers may allow such problems to subsumed by $\SPP$.

\begin{proposition}
	\label{prop:LPWPPunitaryBound}
	$\EQP \subset \LPWPP$.
\end{proposition}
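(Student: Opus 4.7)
The plan is to revisit the Fortnow--Rogers reduction underlying $\EQP \subset \LWPP$ (Proposition~\ref{prop:classicUpperBounds}) and observe that the length-dependent function it produces is always a perfect power of a fixed integer $M$ determined solely by the gate set of the \EQP algorithm. Concretely, I would fix an \EQP decision procedure for $L$ via a uniform family $\{C_n\}$ of circuits with $T(n) = \poly(n)$ gates drawn from some finite gate set $\mathcal G$. Since $\mathcal G$ is finite and its entries are algebraic, all entries lie in a single number field $K/\Q$ of some fixed finite degree $d$; after fixing an integral basis $\beta_0 = 1, \beta_1, \ldots, \beta_{d-1}$ of $\mathcal O_K$ and clearing denominators, there will be an integer $M \ge 1$, depending only on $\mathcal G$, such that every entry of every gate in $\mathcal G$ takes the form $M^{-1}\sum_i a_i \beta_i$ with each $a_i \in \Z$.

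I would then expand the acceptance probability as $\Pr[C_n \text{ accepts } x] = \sum_{y \in S_{\text{acc}}} |\langle y | C_n | 0^n\rangle|^2$, and use the Feynman path sum to write each amplitude as an element of $K$ of the form $M^{-T(n)} \sum_i c_i(x,y)\beta_i$ with $c_i \in \Z$. Multiplying by its conjugate and reducing via the multiplication table of $K$, the whole probability becomes an element of $K$ of the shape $M^{-2T(n)}\sum_i d_i(x)\beta_i$ with $d_i \in \Z$; since this value is rational (indeed $0$ or $1$), all $d_i$ with $i \ge 1$ must vanish, leaving $\Pr[\text{accept}|x] = d_0(x)/M^{2T(|x|)}$. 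A standard splitting of the polynomial $d_0(x)$ into its positive and negative monomial contributions, implemented as a normal-form NTM as in~\cite{FR-1999}, will then produce a gap function $g \in \GapP$ with $g(x) = 2 d_0(x)$.

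By the exactness hypothesis of \EQP, $d_0(x) \in \{0, M^{2T(|x|)}\}$, so setting $h(x) = M^{2T(|x|)}$ — which depends only on $|x|$, is polynomial-time computable, and is a perfect power of the single fixed integer $M$ — witnesses $L \in \LPWPP$ directly from the definition. The main subtlety that must be handled carefully is the algebraic (as opposed to rational) nature of the gate entries: one needs to verify that the auxiliary coefficients $d_i$ for $i \ge 1$ genuinely cancel, which is forced by the rationality of the acceptance probability, and that $d_0$ retains a polynomial-size monomial expansion so that its positive and negative parts can still be enumerated by a polynomial-time NTM. Once this bookkeeping is in place, the rest of the argument is a direct inspection of the Fortnow--Rogers construction with $M$ substituted for the implicit denominator.
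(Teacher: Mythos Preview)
Your proposal is correct and follows essentially the same approach as the paper: both observe that the length-dependent function $h$ produced by the Fortnow--Rogers proof of $\EQP \subset \LWPP$ is in fact a power $M^{t(|x|)}$ of a single integer $M$ determined by the finite gate set, which is exactly the \LPWPP\ condition. You supply more of the algebraic-number bookkeeping (the number field $K$, integral basis, and common denominator $M$) that the paper leaves implicit in its citation of~\cite{FR-1999} and~\cite{ADH-1997}, but the argument is the same.
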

\begin{proof}
	This containment is implicit in Ref.~\cite[Theorem~3.8]{FR-1999}: the length-dependent poly-time computable function in their proof computes powers of some $M \ge 1$, where $M$ is the common denominator of unitary gate coefficients expressed in rationalised form (\ie~with positive integer denominators).
\end{proof}
For zero-error quantum computations, the best known bounds follow from the unbounded-error case.
Let $\NQP$ be the set of problems for which there is a uniform family $\{C_n\}_{n \ge 1}$ of unitary circuits, for which $C_n \ket{x} = \ket{\psi(x)}\ket{0}$ for some $\ket{\psi(x)}$ if $x$ is a \NO\ instance, and where $C_n \ket{x}$ does not factor in this way (\ie~yields the output $\ket{1}$ with non-zero probability) if $x$ is a \YES\ instance.
We may then show:
\begin{proposition}
	\label{prop:ZQPupperBound}
  $\ZQP \subset \Delta\Ceq\P$.
\end{proposition}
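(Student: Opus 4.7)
The plan is to route through the footnote's intermediate claim $\ZQP \subset \NQP \cap \co\NQP$, and then apply the equality $\NQP = \co\Ceq\P$ of Fenner \emph{et~al.}~\cite{FGHP-1999} to get $\NQP \cap \co\NQP = \co\Ceq\P \cap \Ceq\P = \Delta\Ceq\P$. So the main content is to show $\ZQP \subset \NQP$ and $\ZQP \subset \co\NQP$, starting from a standard formulation of $\ZQP$ in which the circuit $C_n$ produces two output qubits: a \emph{flag} qubit $\ket{f}$ (with $f=1$ meaning ``valid'', $f=0$ meaning ``abstain'') and an \emph{answer} qubit $\ket{a}$, such that on any $x$ we have $\Pr[f = 1] \ge \tfrac{2}{3}$, and conditioned on $f = 1$ the answer is correct, i.e., $a = [x \in L]$ with certainty.

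First I would unpack this into two distributional properties: for $x \in L$, one has $\Pr[f=1,\,a=1] \ge \tfrac{2}{3}$ and $\Pr[f=1,\,a=0] = 0$, while for $x \notin L$ the roles of $a=1$ and $a=0$ are swapped. From a $\ZQP$ circuit $C_n$ I would then build an $\NQP$ circuit $C'_n$ by coherently computing the conjunction $f \wedge a$ into a fresh ancilla and designating that ancilla as the single output qubit; the joint probability above is exactly the probability that this new output is $1$, so it is positive iff $x \in L$. This gives $L \in \NQP$. Swapping the role of $a$ (computing $f \wedge \bar a$ instead) gives an $\NQP$ circuit whose output is $1$ with nonzero probability iff $x \notin L$, i.e., $\bar L \in \NQP$, hence $L \in \co\NQP$.

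Finally I would invoke $\NQP = \co\Ceq\P$ to conclude $L \in \co\Ceq\P$ and $\bar L \in \co\Ceq\P$; the latter is equivalent to $L \in \Ceq\P$, so $L \in \Delta\Ceq\P$, completing $\ZQP \subset \Delta\Ceq\P$. I do not expect a serious obstacle here: the one point requiring a bit of care is producing $C'_n$ within the $\NQP$ model, since $\NQP$ is defined via unitary circuits with a designated output qubit---but computing $f \wedge a$ (or $f \wedge \bar a$) into an ancilla with a single Toffoli (respectively, a Toffoli preceded by an $X$ on $a$) is clearly unitary and preserves the polynomial-time uniformity of the circuit family, so this step is routine. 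The only subtlety worth flagging is that we do not need any amplification of the $\ZQP$ success probability: what matters for $\NQP$ is merely that the success probability is positive, which already holds for the unamplified $\ZQP$ algorithm whenever $x \in L$ (respectively $x \notin L$).
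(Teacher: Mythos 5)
Your proposal is correct and follows essentially the same route as the paper: both reduce to showing $\ZQP \subset \NQP \cap \co\NQP$ by observing that the joint event (flag $=1$, answer $=b$) has positive probability exactly for instances with $L(x)=b$, and then invoke $\NQP = \co\Ceq\P$ from Fenner \emph{et al.} Your explicit Toffoli construction just spells out the step the paper summarises as ``we may use this to produce \NQP\ and \co\NQP\ algorithms.''
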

\begin{proof}
  Consider a $\ZQP$ algorithm for a problem, which succeeds with probability $p \in \Omega(1)$, indicated by the outcome $\ket{1}$ on the measurement of some qubit $s$.
	If $a$ is the output qubit, there is a non-zero probability that $(s,a)$ is in the state $\ket{10}$ for \NO\ instances, with no probability of this outcome for \YES\ instances; and a non-zero probability that $(s,a)$ is in the state $\ket{11}$ for \YES\ instances, with no probability of this outcome for \NO\ instances.
	We may use this to produce \NQP\ and \co\NQP\ algorithms for the problem, so that $\ZQP \subset \NQP \cap \co\NQP$.
	The proposition then follows from $\Delta\Ceq\P := \Ceq\P \cap \co\Ceq\P$ and the result of Fenner~\etal~\cite{FGHP-1999} that $\NQP = \co\Ceq\P$
\end{proof}

The class $\BQP$ is widely accepted in the literature as a quantum complexity class of interest.
Even the less-studied class $\ZQP$ contains problems not expected to be solvable with bounded-error by classical algorithms.
(As Nishimura and Ozawa~\cite{NO-2002} point out,\label{discn:factoring} using a zero-error primality testing algorithm and Shor's algorithm as subroutines, we may test whether an integer has a prime factor larger than some threshold $k$ with zero error and at least $\tfrac{1}{2}$ probability of success with quantum algorithms.)
However, there does not yet appear to be any problem known to be in $\EQP$ which is thought to lie outside of $\P$.\footnote{%
	The closest result known to the author is a circuit family for the discrete logarithm problem due to Mosca and Zalka~\cite{MZ-2003}: however, this result involves the preparation of input-dependent quantum superpositions, which cannot be realised in any obvious way using gates from a fixed finite set acting on standard basis states.}
One might suspect that the restrictions imposed on $\EQP$ (to problems that can be decided exactly, using circuit families $\{C_n\}_{n\ge 1}$ composed of unitary gates, drawn from a single finite basis) may be too restrictive to allow an interesting theory of non-classical algorithms.
If we suppose that unitarity is necessary for a physically-motivated model of computation, but we are still interested in principle in which problems could be decided exactly, we may consider families of circuits which are not constructed from a single finite basis.
We may consider circuits with ``potentially infinite'' gate-sets, using the framework of \mbox{Ref.~\cite[\S3.3\,A]{Beaudrap-2014}}:\footnote{%
	Similar families of circuits are those described simply as ``uniform'' by Nishimura and Ozawa~\cite{NO-2002}; however, while they suppose that coefficients are specified by rational approximations, we require an exact representation for which there exist efficient algorithms for arithmetic operations.
}

\begin{definition}
	\label{def:UnitaryP}
  \UnitaryP[\C] is the class of problems for which there is a circuit family $\{ C_n \}_{n \ge 1}$ with unitary gates and preparation of fresh qubits in the state $\ket{0}$, which
  \begin{itemize}
  \item
		is \emph{polynomial-time uniform}: the structure of the circuit $C_n$ and the labels of its gates can be computed in time $O(\poly n)$;
	\item
		has \emph{polynomial-time specifiable gates}: the coefficients in $\C$ of each gate in $C_n$, can be computed from the label of the gate in time $O(\poly n)$;
	\item
		decides $L$ \emph{exactly}: for any input $x$ of size $n$, $C_n \ket{x} = \ket{\psi(x)}\ket{0}$ for \NO\ instances, and $C_n \ket{x} = \ket{\psi(x)}\ket{1}$ for \YES\ instances, where $\ket{\psi(x)}$ is a pure state.
  \end{itemize}
\end{definition}
\noindent
Clearly $\EQP \subset \UnitaryP[\C]$ (by restricting to circuit-families with ``constant-time specifiable'' gates); one may also show that $\UnitaryP[\C] \subset \BQP$ (see \mbox{Ref.~\cite[\S3.3\,C]{Beaudrap-2014}}).
A simple modification of Ref.~\cite[Theorem~6.2]{ADH-1997} will show that the gates of the circuits of $\UnitaryP[\C]$ algorithms may be restricted to algebraic numbers.
As circuits with polytime-specifiable gates are considerably more flexible than those with constant-time specifiable gates --- for example, the former include quantum Fourier transforms of arbitrary order, while the latter does not --- we conjecture that $\EQP$ is strictly contained in $\UnitaryP[\C]$.

It is not obvious whether $\UnitaryP[\C] \subset \LPWPP$.
The bound $\EQP \subset \LPWPP$ relies on the gate-coefficients of the entire family $\{ C_n \}_{n\ge1}$ having a common denominator; in a circuit family $\{C_n\}_{n\ge1}$ constructed from gates from a potentially infinite basis, the set of coefficients (for all gates used by the family) may have no common denominator.
However, we may still bound $\UnitaryP[\C]$ by gap-definable classes.
The following follows by a straightforward modification of the proof of Ref.~\cite[Theorem~3.8]{FR-1999}, using the techniques of Ref.~\cite{ADH-1997} to represent algebraic amplitudes:
\begin{proposition}
	\label{prop:UnitaryPbound}
	$\UnitaryP[\C] \subset \LWPP$.  
\end{proposition}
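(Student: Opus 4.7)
The plan is to adapt the proof of $\EQP \subset \LWPP$ from Ref.~\cite[Theorem~3.8]{FR-1999} to the setting where the unitary gates are drawn from a potentially infinite, polynomial-time specifiable set rather than a fixed finite basis. The Fortnow--Rogers argument expresses the squared acceptance amplitude $|\alpha(x)|^2$ of $C_n$ on input $x$ as $g(x)/M^{T(n)}$ for $g \in \GapP$, where $T$ is a polynomial and $M$ is a common denominator (in some integral basis of a fixed number field) for all algebraic gate coefficients; the exactness condition $|\alpha(x)|^2 \in \{0,1\}$ then yields the \LWPP\ certificate with length-dependent normaliser $h(n) = M^{T(n)}$.

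The key observation driving the modification is that, by polynomial-time uniformity, the multiset of gate labels appearing in $C_n$ depends only on $n$. Since each label has polynomial-time specifiable coefficients, one can compute in time $\poly(n)$ an algebraic-number description (in the style of Ref.~\cite{ADH-1997}) consisting of a number field $K_n$, an integral basis $\mathcal{B}_n$ of $K_n$, and a positive integer $M_n$ that is a common denominator (relative to $\mathcal{B}_n$) for every coefficient of every gate appearing in $C_n$. These objects depend only on $n$, never on $x$, so the resulting normalisation is length-dependent in the sense required by \LWPP.

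With $K_n$, $\mathcal{B}_n$, and $M_n$ in hand, the remainder of the Fortnow--Rogers construction goes through: enumerate the $2^{q(n)}$ computational paths of $C_n$ on input $x$ via branching strings of an NTM in normal form, accumulate the integer-scaled real and imaginary parts (expanded in $\mathcal{B}_n$) of the amplitude after clearing denominators by $M_n^{T(n)}$, and invoke closure of $\GapP$ under addition and multiplication to assemble $|\alpha(x)|^2 = g(x)/M_n^{2T(n)}$ for a single $g \in \GapP$. Exactness forces $g(x) \in \{0, M_n^{2T(n)}\}$, giving the \LWPP\ conditions with half-gap normaliser depending only on $|x|$ (after the customary factor of $2$ is absorbed).

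The main subtlety, rather than a genuine obstacle, is bookkeeping: one must check that the ADH arithmetic — minimal polynomials of the coefficients, change of basis, multiplication and addition in $K_n$ — can be carried out in $\poly(n)$ time even though $K_n$ itself varies with $n$. This is guaranteed by the polynomial-time specifiability hypothesis, which supplies polynomial-size algebraic descriptions of all coefficients, so each ADH subroutine receives polynomial-size input and terminates in polynomial time. No other step of the Fortnow--Rogers proof is sensitive to the finiteness of the gate set.
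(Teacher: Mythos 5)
Your proposal is correct and follows exactly the route the paper intends: the paper's own proof is only the one-line remark that the result ``follows by a straightforward modification of the proof of Ref.~\cite[Theorem~3.8]{FR-1999}, using the techniques of Ref.~\cite{ADH-1997} to represent algebraic amplitudes,'' and your write-up supplies precisely that modification, correctly isolating the key point that the common denominator and number-field data can be computed from $n$ alone (so the normaliser is length-dependent even though the family as a whole need not have a single common denominator, which is exactly why the paper only claims \LWPP\ rather than \LPWPP\ here). You have in fact given more detail than the paper does.
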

\noindent
We consider this evidence that allowing ``potentially infinite'', but efficiently specifiable, gate-sets for quantum algorithms is not computationally extravagant.
(We present a foundational argument for the study of circuit families in the Appendix.)

\subsection{Quasi-quantum complexity}

Consider the following ``quasi-quantum'' models of computation, in which we allow non-unitary transformations of state vectors, following Aaronson~\cite{Aaronson-2005}.
\begin{definition}
	\label{def:quasiQuantumClasses}
  Define the following classes in analogy to $\EQP$, $\UnitaryP[\C]$, and $\ZQP$:
  \vspace*{-0.5ex}
  \begin{itemize}
  \item
		$\EQP_{\GL}$ is the set of problems which may be decided by polytime-uniform circuit families $\{C_n\}$, over a finite set of \emph{invertible} gates, such that $C_n \ket{x} = \ket{\psi(x)}\ket{0}$ if $x$ is a \NO\ instance and $C_n \ket{x} = \ket{\psi(x)}\ket{1}$ if $x$ is a \YES\ instance.
		\smallskip
  \item
		$\GLP[\C]$ is the set of problems which may be decided by polytime-uniform circuit families $\{C_n\}$, with polytime-specifiable \emph{invertible} gates, such that $C_n \ket{x} = \ket{\psi(x)}\ket{0}$ if $x$ is a \NO\ instance and $C_n \ket{x} = \ket{\psi(x)}\ket{1}$ if $x$ is a \YES\ instance.
		\smallskip
	\item
		$\ZQP_{\GL}$ is the set of problems which may be decided with zero error by polytime-uniform circuit families $\{C_n\}$, over a finite set of \emph{invertible} gates, and two sets of standard-basis projectors $\{\Pi_F,\Pi_S\}$ and $\{\Pi_0, \Pi_1\}$ on distinct qubits, and with \emph{constant success probability} in that for the renormalised state-vector $\ket{\Psi(x)} = C_n\!\!\:\ket{x} \big/ \sqrt{\smash{\bra{x} C_n\herm C_n \ket{x}}\phantom\vert\!}$, 
		\vspace*{-0.5ex}
		\begin{itemize}
		\item
			$\bra{\Psi(x)} \Pi_S \ket{\Psi(x)} \ge \tfrac{1}{2}$ for both \YES\ and \NO\ instances $x$;
		\item
			$\bra{\Psi(x)} \Pi_0 \Pi_S \ket{\Psi(x)} = 0$ for \YES\ instances;
		\item
			$\bra{\Psi(x)} \Pi_1 \Pi_S \ket{\Psi(x)} = 0$ for \NO\ instances.
		\end{itemize}
	\item
		$\Post\EQP$ is the set of problems which may be decided by polytime-uniform circuit families $\{C_n\}$, over a finite set of \emph{unitary} gates, and two sets of standard-basis projectors $\{\Pi_F,\Pi_S\}$ and $\{\Pi_0, \Pi_1\}$ on distinct qubits, and \emph{exactly with postselection} in the sense that for $\ket{\Psi(x)} = C_n\ket{x}$,
		\vspace*{-0.5ex}
		\begin{itemize}
		\item
			$\bra{\Psi(x)} \Pi_S \ket{\Psi(x)} > 0$ for both \YES\ and \NO\ instances $x$;
		\item
			$\bra{\Psi(x)} \Pi_0 \Pi_S \ket{\Psi(x)} = 0$ for \YES\ instances;
		\item
			$\bra{\Psi(x)} \Pi_1 \Pi_S \ket{\Psi(x)} = 0$ for \NO\ instances.
		\end{itemize}
  \end{itemize}
\end{definition}
The classes $\EQP_{\GL}$, $\GLP[\C]$, and $\ZQP_{\GL}$ are invertible-gate analogues of $\EQP$, $\UnitaryP[\C]$, and $\ZQP$ respectively, in which we renormalise the state before producing the output.
The class $\Post\EQP$ is a variant of $\EQP$, in which we post-select on the value of some other qubit being in the state $\ket{1}$ (ignoring the output in all other branches) prior to producing the output, and renormalise the state conditioned on the postselected outcome.
For those classes which decide languages exactly, the renormalisation has no effect on the decomposition of the result as a tensor product of the answer and the remaining qubits, and so is omitted from the definitions of those classes.

\vspace*{-2ex}
\paragraph{Remark.} The notations for the classes above are slightly non-uniform.
We use the notation $\GLP[\C]$ (as well as the notation $\UnitaryP[\C]$) as part of the framework for quasi-probabilistic (or ``modal'') computational models defined in Ref.~\cite{Beaudrap-2014}.
The ``$\mathsf{GL}$'' refers to the fact that the gates are elements of $\GL_k(\C)$ for various values of $k$.
We use the notations $\EQP_{\GL}$ and $\ZQP_{\GL}$ as a compromise between this convention and the notation $\EQP_{\!\mathrm{nu}}$ and $\ZQP_{\!\mathrm{nu}}$ which would extend the notation suggested by Aaronson~\cite{Aaronson-2005} for bounded-error ``quantum'' algorithms using invertible non-unitary gates.
One could analogously consider (non-unitary) quasi-quantum ``algorithms'', consisting of circuit families in which the polytime-specifiable gates are affine operators over $\C$, \label{discn:affine}\ie~which conserve the sum of the coefficients of the distributions on which they act (treating them as quasi-probability distributions).
One might then denote the corresponding complexity classes by \AffineP[\C], $\EQP_{\!\mathrm{Aff}}$, and $\ZQP_{\!\mathrm{Aff}}$\,.
(We do not study the latter classes here, but note their existence to justify the notation used for $\EQP_{\GL}$ and $\ZQP_{\GL}$\,.)

\medskip
As the proofs of Propositions~
\ref{prop:LPWPPunitaryBound} and~\ref{prop:UnitaryPbound} do not depend in any way on the transformations involved being unitary, the following proposition is implicit in Refs.~\cite{ADH-1997,FR-1999}:
\begin{proposition}
	\label{prop:exactInvertibleUpperBound}
  $\EQP_{\GL} \subset \LPWPP$ and $\GLP[\C] \subset \LWPP$.
\end{proposition}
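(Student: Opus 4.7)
The plan is to adapt the Fortnow--Rogers construction of Ref.~\cite{FR-1999} by substituting the adjoint $C_n\herm$ used for unitary circuits with the inverse $C_n^{-1}$, which is realisable as a product of gates from a finite set whenever $C_n$ itself is. First I would identify the acceptance condition $C_n \ket{x, 0^k} = \ket{\psi(x)}\ket{b}$ for $\EQP_{\GL}$ with the scalar quantity
\[ Q(x) \;:=\; \bra{x,0^k} C_n^{-1} \bigl( I \ox \ket{1}\bra{1} \bigr) C_n \ket{x, 0^k}, \]
observing that $Q(x) = 0$ on \NO\ instances (since $(I \ox \ket{1}\bra{1}) C_n \ket{x, 0^k}$ already vanishes) and $Q(x) = 1$ on \YES\ instances (by left-multiplying the acceptance relation by $C_n^{-1}$). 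The core task is then to realise $Q(x)$, appropriately scaled, as a $\GapP$ function.

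For this I would expand $Q(x) = \sum_y \bra{x, 0^k} C_n^{-1} \ket{y, 1} \cdot \bra{y, 1} C_n \ket{x, 0^k}$ where $y$ ranges over standard-basis labels of the non-answer qubits, and interpret each factor as a path-sum over the $t = t(n)$ gates comprising $C_n$, respectively $C_n^{-1}$; note that $C_n^{-1}$ is itself a product of $t$ gates from the finite inverse set $S^{-1} := \{g^{-1} : g \in S\}$. Taking $M \ge 1$ to be a common integer denominator of the real and imaginary parts of the coefficients of gates in $S \cup S^{-1}$, the techniques of Ref.~\cite{ADH-1997} express each entry scaled by $M^t$ as a Gaussian integer whose real and imaginary parts are $\GapP$ functions. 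Closure of $\GapP$ under polynomially bounded sums and products of such quantities then lifts this to a $\GapP$ representation of the real part of $M^{2t} Q(x)$, which equals $0$ on \NO\ instances and $M^{2t}$ on \YES\ instances. Doubling this gap-function and taking the length-dependent function $h(n) := M^{2t(n)}$ (a power of the fixed integer $M^2$) would establish $\EQP_{\GL} \subset \LPWPP$.

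For $\GLP[\C] \subset \LWPP$ the construction is the same, but the common denominator must be allowed to vary with the input length: since the coefficients of each gate are polynomial-time specifiable, an integer $M(n)$ bounding the denominators of all coefficients appearing in $C_n$ (and therefore in $C_n^{-1}$, whose entries are computable from those of $C_n$ in time $\poly(n)$) can itself be computed from $n$ in polynomial time. The resulting $h(n) := M(n)^{2t(n)}$ is then polynomial-time computable and length-dependent, which suffices for $\LWPP$ though not for $\LPWPP$. The main technical point to verify is that the path-sum representation of entries of $C_n^{-1}$ interfaces cleanly with the $\GapP$ machinery of Ref.~\cite{ADH-1997}; this is handled by treating $C_n^{-1}$ as its own polynomial-size circuit over $S^{-1}$, exploiting the fact that the path-sum representation depends only on the gate coefficients being algebraic over a common denominator, with no unitarity assumption appearing anywhere in the construction.
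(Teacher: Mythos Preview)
Your argument is correct and, in fact, more careful than the paper's own justification. The paper does not give a standalone proof of this proposition: it simply asserts that the proofs of Propositions~\ref{prop:LPWPPunitaryBound} and~\ref{prop:UnitaryPbound} (which cite Ref.~\cite[Theorem~3.8]{FR-1999} and the techniques of Ref.~\cite{ADH-1997}) ``do not depend in any way on the transformations involved being unitary''. Your proposal makes this claim precise and, in doing so, identifies a subtlety the paper glosses over.

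The Fortnow--Rogers construction literally taken computes a $\GapP$ representation of $M^{2t}\,\bigl\|\Pi_1 C_n\ket{x,0^k}\bigr\|^2 = M^{2t}\,\bra{x,0^k}C_n^{\dagger}\,\Pi_1\,C_n\ket{x,0^k}$. For non-unitary $C_n$ this quantity is $0$ on \NO\ instances but equals $M^{2t}\,\|\psi(x)\|^2$ on \YES\ instances, and $\|\psi(x)\|^2$ is in general \emph{input}-dependent rather than length-dependent --- so the unmodified argument yields only $\WPP$. Your replacement of $C_n^{\dagger}$ by $C_n^{-1}$ is exactly the right repair: the scalar $Q(x)=\bra{x,0^k}C_n^{-1}\Pi_1 C_n\ket{x,0^k}$ is genuinely $\{0,1\}$-valued by invertibility, and since $C_n^{-1}$ is itself a polynomial-size circuit over the finite set $S^{-1}$ (or, in the $\GLP[\C]$ case, a polytime-specifiable circuit whose gate coefficients are computable from those of $C_n$), the path-sum and common-denominator machinery applies verbatim to give a $\GapP$ function with half-gap value $M^{2t(n)}$ (respectively $M(n)^{2t(n)}$). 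This is the substantive content behind the paper's one-line citation; your write-up could reasonably be used as the actual proof.
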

\noindent
That is, allowing non-unitary operations does not allow such exact ``quasi-quantum'' algorithms to exceed the known upper bounds on exact quantum complexity.
Similarly, the results of Fenner~\etal~\cite{FGHP-1999} that $\NQP = \co\Ceq\P$ does not depend on the gates being unitary; we then have
\begin{proposition}
	\label{prop:ZQP-GL-upperBound}
	$\ZQP_{\GL} \subset \Delta\Ceq\P$.
\end{proposition}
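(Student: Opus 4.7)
The plan is to mimic the proof of Proposition~\ref{prop:ZQPupperBound} almost verbatim, with the renormalisation in Definition~\ref{def:quasiQuantumClasses} playing the role that unitarity played there. Let $L \in \ZQP_{\GL}$, witnessed by a circuit family $\{C_n\}$ with invertible gates and projectors $\{\Pi_F,\Pi_S\}$, $\{\Pi_0,\Pi_1\}$ acting on distinct qubits. For each input $x$, let $\ket{\Psi(x)} = C_n\ket{x}/\sqrt{\bra{x}C_n\herm C_n\ket{x}}$, which is well-defined since $C_n$ is invertible. The success condition $\bra{\Psi(x)}\Pi_S\ket{\Psi(x)} \ge \tfrac12$, combined with $\bra{\Psi(x)}\Pi_0\Pi_S\ket{\Psi(x)} = 0$ on \YES\ instances, forces $\bra{\Psi(x)}\Pi_1\Pi_S\ket{\Psi(x)} \ge \tfrac12 > 0$ on such instances; similarly $\bra{\Psi(x)}\Pi_0\Pi_S\ket{\Psi(x)} \ge \tfrac12 > 0$ on \NO\ instances. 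So the two quantities $\bra{\Psi(x)}\Pi_1\Pi_S\ket{\Psi(x)}$ and $\bra{\Psi(x)}\Pi_0\Pi_S\ket{\Psi(x)}$ are strictly positive on exactly the \YES\ and \NO\ instances respectively.

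The second step is to package each of these into an invertible-gate analogue of $\NQP$. Define $\mathsf{NQP}_{\GL}$ as the class of languages decided by polytime-uniform circuit families $\{D_n\}$ of invertible gates, such that $D_n\ket{x}$ has a nonzero amplitude on some flagged accepting standard basis state iff $x$ is a \YES\ instance. For the \YES\ side, take $D_n = C_n$ and flag those basis states in the joint support of $\Pi_S$ and $\Pi_1$: since the normalising factor $\sqrt{\bra{x}C_n\herm C_n\ket{x}}$ is nonzero, $\bra{\Psi(x)}\Pi_1\Pi_S\ket{\Psi(x)} \ne 0$ is equivalent to the existence of a nonzero amplitude $\bra{y}\Pi_1\Pi_S C_n\ket{x} \ne 0$ for some accepting basis state $\ket{y}$. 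This gives $L \in \mathsf{NQP}_{\GL}$. The mirror construction, flagging the joint support of $\Pi_S$ and $\Pi_0$, gives $L \in \co\mathsf{NQP}_{\GL}$, so $\ZQP_{\GL} \subset \mathsf{NQP}_{\GL} \cap \co\mathsf{NQP}_{\GL}$.

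The final step is to observe, as the author already remarks just before the statement, that the equality $\NQP = \co\Ceq\P$ of Fenner \etal~\cite{FGHP-1999} is insensitive to whether the circuit gates are unitary: it rests only on expressing the amplitude $\bra{y}D_n\ket{x}$ as a \GapP\ function of $(x,y)$ via Leibniz-style expansion over circuit paths, which requires only that the gate coefficients are (efficiently specifiable) algebraic numbers. Hence $\mathsf{NQP}_{\GL} = \co\Ceq\P$, and combining this with the containment above gives $L \in \co\Ceq\P \cap \Ceq\P = \Delta\Ceq\P$, as required.

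The only real obstacle is verifying that the Fenner \etal~argument genuinely extends to the invertible-gate setting; this is where one must be careful that polytime-specifiability of the invertible gate coefficients (and the fact that denominators do not vanish on the nonzero inputs of interest) suffices to carry the path-sum/\GapP\ representation through. Given the author's framing, this should be a routine adaptation rather than a new technical difficulty.
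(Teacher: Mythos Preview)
Your proposal is correct and follows essentially the same approach as the paper, which simply remarks that the argument of Proposition~\ref{prop:ZQPupperBound} carries over because the proof of $\NQP = \co\Ceq\P$ in Ref.~\cite{FGHP-1999} does not use unitarity of the gates. Your final caveat about polytime-specifiability is unnecessary here, since $\ZQP_{\GL}$ is defined over a \emph{finite} invertible gate-set, so the gate coefficients lie in a fixed algebraic number field with a common denominator and the path-sum/\GapP\ representation goes through without further care.
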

\noindent 
Finally, given a $\Post\EQP$ algorithm for a problem, we may implement an $\NQP$ and a $\co\NQP$ algorithm for the same problem along the lines described in the proof of Proposition~\ref{prop:ZQPupperBound}, so that we have:
\begin{proposition}
	\label{prop:PostEQP-upperBound}
  $\Post\EQP \subset \Delta \Ceq\P$.
\end{proposition}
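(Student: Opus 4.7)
The plan is to mirror the proof of Proposition~\ref{prop:ZQPupperBound}, showing that any $\Post\EQP$ circuit family directly yields both an $\NQP$ algorithm and a $\co\NQP$ algorithm for the same problem. The inclusion then follows from $\Post\EQP \subset \NQP \cap \co\NQP = \co\Ceq\P \cap \Ceq\P = \Delta\Ceq\P$, invoking $\NQP = \co\Ceq\P$ of Fenner~\etal~\cite{FGHP-1999} (together with $\Ceq\P = \co(\co\Ceq\P)$).

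Concretely, suppose $L \in \Post\EQP$ is witnessed by a circuit family $\{C_n\}$ and projector pairs $\{\Pi_F, \Pi_S\}$, $\{\Pi_0, \Pi_1\}$ on disjoint output qubits $s$ and $a$, and write $\ket{\Psi(x)} = C_n\ket{x}$. Since $\Pi_0 + \Pi_1$ acts as the identity on qubit $a$, we can decompose
\[
	\bra{\Psi(x)} \Pi_S \ket{\Psi(x)} \;=\; \bra{\Psi(x)} \Pi_0 \Pi_S \ket{\Psi(x)} \;+\; \bra{\Psi(x)} \Pi_1 \Pi_S \ket{\Psi(x)}.
\]
Combined with the defining conditions of $\Post\EQP$ (the left side is strictly positive for every input, while $\bra{\Psi(x)} \Pi_0 \Pi_S \ket{\Psi(x)} = 0$ for \YES\ instances and $\bra{\Psi(x)} \Pi_1 \Pi_S \ket{\Psi(x)} = 0$ for \NO\ instances), this shows that $\bra{\Psi(x)} \Pi_1 \Pi_S \ket{\Psi(x)} > 0$ precisely on the \YES\ instances, and $\bra{\Psi(x)} \Pi_0 \Pi_S \ket{\Psi(x)} > 0$ precisely on the \NO\ instances.

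The unitary circuit obtained from $C_n$ by accepting iff the standard-basis measurement of $(s,a)$ yields the outcome corresponding to $\Pi_S \otimes \Pi_1$ therefore has nonzero acceptance probability exactly on \YES\ instances, which is an $\NQP$ algorithm for $L$; the symmetric construction using $\Pi_S \otimes \Pi_0$ yields a $\co\NQP$ algorithm. No step appears to pose an obstacle: the argument is essentially bookkeeping and proceeds in the same way as for $\ZQP$, the one difference being that the postselection success weight need not be bounded below by a constant. This causes no trouble because $\NQP$ and $\co\NQP$ only require the acceptance probability to be strictly positive, not bounded away from zero.
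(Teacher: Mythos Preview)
Your proposal is correct and follows essentially the same route as the paper: the paper simply remarks that a $\Post\EQP$ algorithm yields both an $\NQP$ and a $\co\NQP$ algorithm ``along the lines described in the proof of Proposition~\ref{prop:ZQPupperBound}'', and then invokes $\NQP = \co\Ceq\P$. Your explicit use of the decomposition $\bra{\Psi(x)}\Pi_S\ket{\Psi(x)} = \bra{\Psi(x)}\Pi_0\Pi_S\ket{\Psi(x)} + \bra{\Psi(x)}\Pi_1\Pi_S\ket{\Psi(x)}$ and your observation that the lack of a constant lower bound on the postselection weight is irrelevant for $\NQP$ membership are exactly the details the paper leaves implicit.
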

The main results of this article are to show that the containments of the above three propositions all hold with equality.

\section{Technical definitions in exact counting complexity}

We now define some concepts relating to algorithms for exact gap-definable classes.
This will simplify the analysis of simulations of these algorithms by quasi-quantum algorithms, of the sort described as part of Definition~\ref{def:quasiQuantumClasses}.

\subsection{Dual nondeterministic machines}

It will prove helpful to describe algorithms in terms of \emph{dual nondeterministic Turing machines}: a pair of normal-form NTMs $(\mathbf N_0, \mathbf N_1)$ for a language $L$, such that $\mathbf N_0$ represents a $\Ceq\P$ algorithm for $L$ (whose gap-function $g_0$ evaluates to zero for $x \in L$) and $\mathbf N_1$ represents a $\co\Ceq\P$ algorithm for $L$ (whose gap-function $g_1$ evaluates to zero for $x \notin L$).
We further require that for any input $x \in \{0,1\}^\ast$, pairs of dual NTMs make the same number of nondeterministic choices as one another.

\begin{lemma}
  Every $L \in \Ceq\P \cap \co\Ceq\P$ has a pair of dual NTMs.
\end{lemma}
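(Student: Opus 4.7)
The plan is to start from two arbitrary witnessing NTMs and then normalise both their format (normal form) and their branching length (matching $q(|x|)$) by standard padding constructions that preserve the vanishing pattern of the gap functions.

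First, since $L \in \Ceq\P$ and $L \in \co\Ceq\P$, unpack the definitions to obtain two (a priori unrelated) poly-time NTMs $\mathbf{M}_0$ and $\mathbf{M}_1$ with gap functions $f_0, f_1 \in \GapP$ such that $f_0(x) = 0 \iff x \in L$ and $f_1(x) = 0 \iff x \notin L$. Neither needs to be in normal form a priori. To put them in normal form, I invoke the standard result of Ref.~\cite{FFK-1994} quoted in the preceding paragraph of the excerpt: every $f \in \GapP$ is $\tfrac{1}{2}$ times the gap function of some NTM in normal form. This gives normal-form NTMs $\mathbf{N}_0', \mathbf{N}_1'$ whose gap functions are $2f_0$ and $2f_1$; these still vanish precisely on $L$ and on its complement, respectively.

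Next, equalise the number of nondeterministic choices. Let $q_0(|x|)$ and $q_1(|x|)$ be the (polynomial) branching lengths of $\mathbf{N}_0'$ and $\mathbf{N}_1'$, and set $q(n) := \max\{q_0(n), q_1(n)\}$. I pad the shorter of the two machines by inserting $q(|x|) - q_i(|x|)$ additional binary nondeterministic transitions at (say) the start of the computation, with each transition deterministically leading its two branches to the same successor state. This preserves normal form, is uniformly computable in poly-time, and multiplies the gap by a factor of $2^{q(|x|) - q_i(|x|)}$, which is a positive integer depending only on $|x|$. Thus the new gap functions $g_0, g_1$ still satisfy $g_0(x) = 0 \iff x \in L$ and $g_1(x) = 0 \iff x \notin L$, and both of the resulting NTMs $\mathbf{N}_0, \mathbf{N}_1$ make exactly $q(|x|)$ binary nondeterministic choices on any input $x$.

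The only subtle point is to verify that the padding construction truly preserves the property of being in normal form and does not disturb the semantics of the original gap function; this is routine but it is the one place where one must be careful to check both that every computational branch of the padded machine still terminates in an accepting or rejecting state along an input-length-independent schedule, and that the doubling of the gap per padded coin-flip is \emph{uniform} (so that the zero/nonzero pattern of $g_0, g_1$ matches that of $2f_0, 2f_1$). Since $2^{q(|x|) - q_i(|x|)} > 0$ everywhere, this is immediate, completing the construction of the required pair of dual NTMs.
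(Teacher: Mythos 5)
Your proof is correct and follows essentially the same route as the paper: normalise the two witnessing machines and pad whichever makes fewer nondeterministic transitions so that both make $q(|x|) = \max\{q_0(|x|),q_1(|x|)\}$ choices, noting that the padding only rescales the gap by a positive factor and hence preserves its zero/nonzero pattern. The only cosmetic difference is that the paper's padding gadget is arranged so the gap is multiplied by exactly $2$ (independently of the amount of padding), whereas yours multiplies it by $2^{q(|x|)-q_i(|x|)}$; since only the vanishing pattern of $g_0,g_1$ matters for membership in $\Ceq\P \cap \co\Ceq\P$, both constructions establish the lemma.
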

\begin{proof}
	Let $\mathbf N_0$ be an NTM performing a $\Ceq\P$ algorithm for $L$ with $q_0(|x|)$ nondeterministic transitions in each branch, and $\mathbf N_1$ be an NTM performing a $\co\Ceq\P$ algorithm for $L$ with $q_1(|x|)$ nondeterministic transitions in each branch.
	We augment whichever machine $\mathbf N_j$ makes fewer transitions, as follows.
	Construct an NTM $\mathbf N'_j$ which has branching strings $b \in \{0,1\}^{\max\{q_0(|x|), q_1(|x|)\}}$, which first simulates $\mathbf N_j$, and then performs the non-deterministic transitions corresponding to the final $\bigl|q_1(|x|) - q_0(|x|)\bigr|$ bits of $b$, recording those bits as it does so.
	If those bits are of the form $1^\ast(0|1)$, then $\mathbf N'_j$ accepts if and only if $\mathbf N_j$ accepts; otherwise $\mathbf N'_j$ accepts if the parity of the substring is odd.
	The resulting gap-function $g'_j$ then satisfies $g'_j(x) = 2g_j(x)$ for all $x$, and the machines $N'_j$ and $N_{(1-j)}$ perform the same number of nondeterministic transitions.
\end{proof}
\noindent
It will also be useful to consider \emph{dual $\LWPP\!$ machines} (and \emph{dual $\LPWPP\!$ machines}): a pair $(\mathbf N_0,\mathbf N_1)$ of dual machines for some $L \in \LWPP$ (or $L \in \LPWPP$ respectively), for which there is a single poly-time computable, length-dependent, non-zero function $h$ (whose values are all powers of a fixed $M \ge 1$ for $L \in \LPWPP$) such that either $\tfrac{1}{2}g_0(x) = h(x)$ or $\tfrac{1}{2}g_1(x) = h(x)$.

\begin{lemma}
  Every $L \in \LWPP$ has a pair of dual \LWPP\! machines, and every $L \in \LPWPP$ has a pair of dual \LPWPP\! machines.
\end{lemma}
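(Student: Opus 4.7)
The plan is to reuse the given \LWPP\ machine for $L$ as the $\co\Ceq\P$ member $\mathbf N_1$ of the dual pair, and to build the $\Ceq\P$ member $\mathbf N_0$ by complementing its gap function using the same $h$. Let $\mathbf M$ be a normal-form NTM whose gap $g$ satisfies $g(x) = 2h(|x|)$ on $L$ and $g(x) = 0$ otherwise, with $h$ length-dependent and polytime computable. Setting $\mathbf N_1 := \mathbf M$ gives $g_1 = 0$ iff $x \notin L$ and $\tfrac{1}{2}g_1(x) = h(|x|)$ whenever $x \in L$; the target for $\mathbf N_0$ is then an NTM whose gap equals $2h(|x|) - g_1(x)$, which vanishes on $L$ and equals $2h(|x|)$ on $\bar L$.

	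Producing such a machine is a standard application of the closure properties of \GapP. Since $2h(|x|) \in \FP \subset \GapP$, there is a normal-form NTM $\mathbf H$ realising this function; swapping the accept and reject states of $\mathbf M$ produces $\mathbf M^-$ with gap $-g_1$; and a single initial nondeterministic bit selecting between simulating $\mathbf H$ and simulating $\mathbf M^-$ yields an NTM whose gap is exactly $2h(|x|) - g_1(x)$.

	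To match branching lengths between $\mathbf N_0$ and $\mathbf N_1$ without disturbing $h$, I would use a \emph{gap-preserving} pad in place of the doubling pad of the preceding lemma: to extend an NTM with $q$ nondeterministic transitions and gap $g'$ up to branching length $Q$, prepend $Q - q$ bits, simulate the original machine on the unique all-zero prefix (contributing $g'$ to the gap), and on each of the remaining $2^{Q - q} - 1$ prefixes run a balanced gadget on the remaining $q$ bits whose accept and reject counts agree (for instance, accept iff the parity of those bits is odd, contributing $0$). The \LPWPP\ case proceeds identically with $h(|x|) = M^{t(|x|)}$: neither the construction of $\mathbf N_0$ nor the gap-preserving pad spoils the property that every non-zero half-gap is the fixed $M$ to a polytime-computable power. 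The main technical obstacle is the bookkeeping: verifying that $\mathbf H$, $\mathbf M^-$, the balanced padding gadget, and the selector bit can be glued together in normal form at a single common polynomial branching length $Q(|x|)$ that dominates all of them.
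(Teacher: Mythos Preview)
Your proposal is correct and follows essentially the same approach as the paper: build $\mathbf N_1$ with gap $g$ and $\mathbf N_0$ with gap $2h - g$ via the obvious \GapP\ closure (negation plus an \FP\ shift), then verify that the same length-dependent $h$ governs both. The only difference is that the paper constructs both machines directly at branching length $q(|x|)+1$ --- using the selector bit for $\mathbf N_1$ as a built-in gap-preserving pad --- so no separate padding step is needed; your version is slightly more modular but incurs the extra bookkeeping you flag at the end.
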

\begin{proof}
  Consider a poly-time NTM $\mathbf N$ in normal form with branching strings $b \in \{0,1\}^{q(|x|)}$, representing an \LWPP\ algorithm to decide a language $L$, in that the gap-function $g$ of $\mathbf N$ satisfies $g(x) = 0$ for $x \notin L$ and $g(x) = 2h(x)$ for $x \in L$, where $h$ is non-zero, efficiently computable, and length-dependent.  
  We may form a dual pair of NTMs $(\mathbf N_0, \mathbf N_1)$, as follows.
	We construct $\mathbf N_0$ to make a nondeterministic transition corresponding to a bit $\beta$.
	If $\beta = 0$, it nondeterministically guesses an integer $0 \le b' < 2^{q(|x|)}$, and rejects if $b' < 2^{q(|x|)-1} - h(x)$, accepting otherwise.
	Otherwise, for $\beta = 1$, $\mathbf N_0$ simulates $\mathbf N$, and accepts if and only if $\mathbf N$ rejects.
	If $g$ is the gap function for $\mathbf N$, the gap function for $\mathbf N_0$ is then $-g(x) + 2h(x)$, which is equal to $2h(x)$ if $x \notin L$ and $0$ otherwise.
	We similarly construct $\mathbf N_1$ to make a nondeterministic transition corresponding to a bit $\beta$; if $\beta = 0$ it accepts on exactly half of the branching strings, and if $\beta = 1$ it accepts if and only if a simulation of $\mathbf N$ accepts.
	Then $(\mathbf N_0,\mathbf N_1)$ are dual $\LWPP$ machines, with branching strings in $\{0,1\}^{q(|x|)+1}$, and gap functions governed by the same poly-time computable function $h$ which governs $\mathbf N$.
	In particular, if $L \in \LPWPP$, we may require that $h$ computes powers of some fixed integer $M \ge 1$, in which case $(\mathbf N_0, \mathbf N_1)$ are dual $\LPWPP$ machines.
\end{proof}

\subsection{Verifier functions and gap amplitudes}
\label{sec:verifierFunctions}%
\label{sec:gapAmplitudes}

It will prove more convenient in our analysis to refer to polytime-computable functions $f: \{0,1\}^\ast \x \{0,1\}^\ast \to \{0,1\}$, which compute the acceptance conditions of NTMs in normal form, rather referring to than the NTMs themselves.
In particular, if $\mathbf N$ is a poly-time NTM in normal form which halts in $q(n)$ steps for inputs of length $n \in \N$, we let 
$f(x,b) \in \{0,1\}$ represent the acceptance condition of $\mathbf N$ in the branch represented by $b$ on the input $x$.
(For a boolean string $b$ which is longer or shorter than $q(|x|)$, we suppose that $f(x,b) = 0$.)
\begin{subequations}
We then define:
\begin{align}
\!
	A(x,f,q) \,&=\, \#\Bigl\{ b \in \{0,1\}^{q(|x|)} \;\Big|\; f(x,b) = 1 \Bigr\},\!
\\[0.5ex]
\!
	R(x,f,q) \,&=\, \#\Bigl\{ b \in \{0,1\}^{q(|x|)} \;\Big|\; f(x,b) = 0 \Bigr\},\!
\intertext{%
and the (half-)gap function
}
	\Delta(x,f,q) \,&=\, \tfrac{1}{2}\Bigl[R(x,f,q) - A(x,f,p)\Bigr] \,=\, R(x,f,q) \:\!-\:\! 2^{q(|x|)\text{\;\!--\;\!}1}.
\end{align}
\end{subequations}
The functions $A(x,f,q)$ and $R(x,f,q)$ will largely occur in our analysis of amplitudes, preceded by a small scalar factor.
Furthermore, we are mainly interested in the case where $f$ computes the acceptance condition of one of a pair of dual NTMs.
We will typically consider ``dual pairs'' of verifier functions: boolean functions
\begin{align}
	V_n\sur{0}&: \{0,1\}^n \x \{0,1\}^{q(n)} \to \{0,1\}, &
	V_n\sur{1}&: \{0,1\}^n \x \{0,1\}^{q(n)} \to \{0,1\}
\end{align}
which compute the respective acceptance conditions of a pair of dual NTMs $(\mathbf N_0, \mathbf N_1)$ on inputs $x \in \{0,1\}^n$.
When the pair of dual NTMs are defined by context (with $q$ implicitly depending on them), we then write
\begin{align}
	\alpha_x\sur{c} &= \frac{\mbox{\small$A(x,V^{(c)},q)$}}{2^{q(|x|)}}\,,
	&
	\rho\sur{c}_x &= \frac{\mbox{\small$R(x,V^{(c)},q)$}}{2^{q(|x|)}}\,,
	&
	\delta\sur{c}_x &= \frac{\Delta(x,V\sur{c},q)}{2^{q(|x|)}}
\end{align}
for the sake of brevity.
(Note that $\rho\sur{c}_x + \alpha\sur{c}_x = 1$ by construction.)

We refer to $\delta\sur{0}_x, \delta\sur{1}_x$ as the \emph{gap amplitudes} of the NTMs $\mathbf N_0$ and $\mathbf N_1$; our simulation techniques largely concern evaluating such gap amplitudes.

\section{Quasi-quantum algorithms for $\Delta\Ceq\P$}

The definitions of the classes $\ZQP_{\GL}$ and $\Post\EQP$ in Definition~\ref{def:quasiQuantumClasses} are quite similar: the notional ``exactness'' of $\Post\EQP$ belies the fact that it involves a projective operation, which allows one to neglect unbounded error.
We show in this section, for zero-error quantum-like algorithms, how these two varieties of non-unitarity are equivalent.

\subsection{A unitary circuit to construct gap amplitudes}
\label{sec:genericAcceptanceGaps}

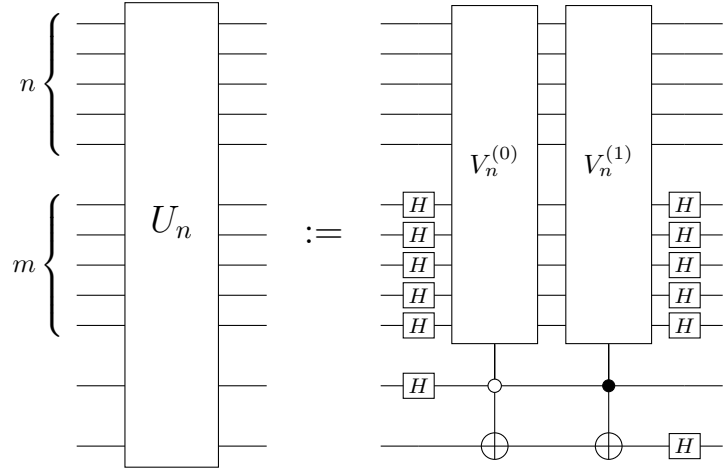
\begin{figure}[!t]
\begin{center}
\begin{tikzpicture}
		\def\J{1,2,3,4,5}
		\foreach \j in \J {%
			\ifnum\j=1
				\coordinate (x\j-0) at (0,0);
			\else
				\coordinate (x\j-0) at ($(\prev) + (0,-0.4)$);
			\fi
			\xdef\prev{x\j-0}
		}
				
		\foreach \j in \J {%
			\ifnum\j=1
				\coordinate (b\j-0) at ($(\prev) + (0,-0.8)$);
			\else
				\coordinate (b\j-0) at ($(\prev) + (0,-0.4)$);
			\fi
			\xdef\prev{b\j-0}
		}
					
		\coordinate (c-0) at ($(\prev) + (0,-0.8)$);
		\xdef\prev{c-0}

		\coordinate (a-0) at ($(\prev) + (0,-0.8)$);
		\xdef\prev{a-0}
		

		\foreach \t/\Dt in {1/-4,2/-2.75,3/-1.5} {
			\foreach \l in {x,b} {%
				\foreach \j in \J {%
					\coordinate (U\l\j-\t) at ($(\l\j-0) + (\Dt,0)$);
					\ifnum\t>1
						\draw (U\l\j-\u) -- (U\l\j-\t);
					\fi
				}
			}
			\foreach \l in {c,a} {%
				\coordinate (U\l-\t) at ($(\l-0) + (\Dt,0)$);
					\ifnum\t>1
						\draw (U\l-\u) -- (U\l-\t);
					\fi
			}
			\xdef\u{\t}
		}
		\node (Ux1-2) at (Ux1-2) {$X$};
		\node (Ua-2) at (Ua-2) {$X$};
		\node (U) [draw,fill=white,inner sep=1pt,minimum width=3.5em,fit=(Ux1-2)(Ua-2)]
			{{\Large$\!\!\!U_n\!\!\!$}};
		\node (Ux) at ($(Ux1-1)!0.5!(Ux5-1)$) [anchor=east] {$n \left\{\mathclap{\begin{matrix} \\[10ex] \end{matrix}}\right.$};
		\node (Ub) at ($(Ub1-1)!0.5!(Ub5-1)$) [anchor=east] {$m \left\{\mathclap{\begin{matrix} \\[10ex] \end{matrix}}\right.$};

		\node at ($(x1-0)!0.5!(Ua-3)$) {\Large$:=$};
		
		\foreach \u/\t/\dt in {0/1/0.5,1/2/1,2/3/1.5,3/4/1,4/5/0.5}{
			\foreach \l in {x,b} {%
				\foreach \j in \J {%
					\coordinate (\l\j-\t) at ($(\l\j-\u) + (\dt,0)$);
					\draw [black] (\l\j-\u) -- (\l\j-\t);
				}
			}
			\foreach \l in {c,a} {%
				\coordinate (\l-\t) at ($(\l-\u) + (\dt,0)$);
				\draw [black] (\l-\u) -- (\l-\t);
			}
		}
		
		\foreach \j in \J {	\node [draw,fill=white,inner sep=2pt] at (b\j-1) {\footnotesize$H$}; }
		\node [draw,fill=white,inner sep=2pt] at (c-1) {\footnotesize$H$}; 
		
		\node (x1-2) at (x1-2) {};
		\node (b5-2) at (b5-2) {};
		\node (R) at ($(x1-2)!0.5!(b5-2)$) {$ V^{(c)}_n $};
		\node (R) [draw,fill=white,fit=(x1-2)(b5-2)(R)] {$ V^{(0)}_n $};
		\draw [black] (a-2) circle (5pt);
		\draw (c-2) -- (R) -- ($(a-2) + (0,-5pt)$);
		\fill [white] (c-2) circle (2.5pt);
		\draw [black] (c-2) circle (2.5pt);
	
		\node (x1-3) at (x1-3) {};
		\node (b5-3) at (b5-3) {};
		\node (R) at ($(x1-3)!0.5!(b5-3)$) {$ V^{(c)}_n $};
		\node (R) [draw,fill=white,fit=(x1-3)(b5-3)(R)] {$ V^{(1)}_n $};
		\draw [black] (a-3) circle (5pt);
		\filldraw [black] (c-3) circle (2.25pt);
		\draw (c-3) -- (R) -- ($(a-3) + (0,-5pt)$);

		\foreach \j in \J {		\node [draw,fill=white,inner sep=2pt] at (b\j-4) {\footnotesize$H$}; }
		\node [draw,fill=white,inner sep=2pt] at (a-4) {\footnotesize$H$};



\end{tikzpicture}
\end{center}
\caption{%
	Diagram for a unitary circuit $U_n$, to produce gap amplitudes corresponding to two a pair of dual nondeterministic Turing machines $(\mathbf N_0, \mathbf N_1)$.
	The functions $V_n^{(0)}$ and $V_n^{(1)}$ compute whether $\mathbf N_0$ and $\mathbf N_1$ (respectively) accept on an input $x \in \{0,1\}^n$, in a given computational branch labelled by $b \in \{0,1\}^m$ for $m = q(n) \in O(\poly n)$.
	The function $V^{(0)}_n$ is (coherently) conditioned on the second-to-last bit being in the state $|0\rangle$, while $V^{(1)}_n$ is conditioned on it being in the state $|1\rangle$; in each case the result is XORed onto the final bit.
	The final Hadamard operations serve to accumulate amplitudes corresponding to the accepting or rejecting branches of $\mathbf N_0$ and $\mathbf N_1$, and compute gap amplitudes for each machine $\mathbf N_j$.
	The component in which the ``branching register'' is $|0^m\rangle$ and the final bit is $|1\rangle$ then has an amplitude proportional to a gap function.
}
\label{fig:unitaryCircuit}
\end{figure}

Consider a dual pair of verifier functions $V_n\sur{0}$ and $V_n\sur{1}$ as in Section~\ref{sec:verifierFunctions}, corresponding to dual NTMs $(\mathbf N_0, \mathbf N_1)$ acting on inputs of length $n$, and let $m = q(n)$ be the size of the branching strings accepted by these verifiers.
Figure~\ref{fig:unitaryCircuit} presents a unitary circuit $U_n$ to produce the gap amplitudes of $(\mathbf N_0, \mathbf N_1)$ as amplitudes of a pure state.
Consider the evolution of the qubits in the circuit when presented with an input of $\ket{x}\ket{0^{m+2}}$.
After the first round of Hadamard operations, followed by coherently evaluating both verifier circuits, the state of the system may be expressed by
\begin{equation}
	\ket{\Psi_1(x)} = \frac{1}{\sqrt{2^{m+1}}}\mathop{\sum \sum}_{\substack{\mathclap{b \in \{0,1\}^m} \\ c\in\{0,1\}}} \ket{x}\ket{b}\ket{c}\ket{\smash{V^{(c)}(x,b)}}.
\end{equation}
\begin{subequations}
After the second round of Hadamards on the branching register (not counting the Hadamard on the final bit), we obtain the state
\begin{equation}
\begin{aligned}[b]
	\label{eqn:penultimate-Un-a}
		\ket{\Psi_2(x)}
	={}& \frac{1}{2^{m}\sqrt{2}}\mathop{\sum \sum \sum}_{\substack{\mathclap{\,\,b,z \in \{0,1\}^m\!\!} \\ c\in\{0,1\}}} \,(-1)^{z \cdot b} \ket{x}\ket{z}\ket{c}\ket{\smash{V^{(c)}(x,b)}}
	\\[0.5ex]{}={}&
		\ket{\Psi'_2(x)} + \frac{1}{2^{m}\sqrt{2}}\mathop{\sum \sum}_{\substack{\mathclap{\,\,b \in \{0,1\}^m\!\!} \\ c\in\{0,1\}}} \; \ket{x}\ket{0^m}\ket{c}\ket{\smash{V^{(c)}(x,b)}},
\end{aligned}
\end{equation}
where $\ket{\Psi'_2(x)}$ is simply the contributions to $\ket{\Psi_2(x)}$ for $z \ne 0^n$:
\begin{equation}
\begin{aligned}[b]
	\label{eqn:simulationFailComponent-1}
		\ket{\Psi'_2(x)}
	={}& \frac{1}{2^{m}\sqrt{2}}\,\mathop{\sum \sum \sum}_{\substack{\mathclap{\,\,\,b\!\!\;,z \in \{0,1\}^{\!\!\:m}\!\!,\;\!z\ne0^m\!\!} \\ c\in\{0,1\}}} \; (-1)^{z \cdot b} \ket{x}\ket{z}\ket{c}\ket{\smash{V^{(c)}(x,b)}}.
\end{aligned}
\end{equation}
\end{subequations}
In particular, we have $(\idop\sox{n} \ox \bra{0^m} \ox \idop\sox{2}) \ket{\Psi'_2(x)} = \vec 0$.
Evaluating the sum over $b \in \{0,1\}^m$ in the final term of Eqn.~\eqref{eqn:penultimate-Un-a}, we may then express
\begin{equation}
	\begin{aligned}[b]
	\label{eqn:penultimate-Un-b}
	\ket{\Psi_2(x)} :={}&
		\ket{\Psi'_2(x)}
		+ \tfrac{1}{\sqrt{2}}\ket{x}\ket{0^m} \sum_{\mathclap{c\in\{0,1\}}}\;\Bigl(
			\frac{\mbox{\small$R(x,V^{(c)}\!,m)$}}{2^{m}}\ket{c\;\!0} + \frac{\mbox{\small$A(x,V^{(c)}\!,m)$}}{2^{m}}\ket{c1} \Bigr)
	\mspace{-36mu}
	\\[1ex]{}={}&
				\ket{\Psi'_2(x)}
				+ \tfrac{1}{\sqrt 2} \ket{x} \ket{0^m} \sum_{\mathclap{c\in\{0,1\}}}\;\Bigl(
					\rho\sur{c}_x\!\!\;\ket{c\;\!0} + \alpha\sur{c}_x\!\!\;\ket{c1} \Bigr) .
	\end{aligned}
\end{equation}
To produce the gap amplitude in the amplitudes of the components, we perform the Hadamard operation on the final bit, yielding
\begin{equation}
	\begin{aligned}[b]
	\mspace{-12mu}
	\ket{\psi_x} :=
	\ket{\Psi_3(x)} ={}&
				\ket{x} \ket{0^m} \, \sum_{\mathclap{c\in\{0,1\}}} \:\: \biggl(
				\biggl[\frac{\rho\sur{c}_x\!+\alpha\sur{c}_x}{2}\biggr]\!\!\;\ket{c\;\!0} + \biggl[\frac{\rho\sur{c}_x\!-\alpha\sur{c}_x}{2}\biggr]\!\!\;\ket{c1} \biggr)
		\\[-1.5ex]&
				\phantom{\ket{x} \ket{0^m} \, \sum_{\mathclap{c\in\{0,1\}}} \:\: \biggl(
				\biggl[\frac{\rho\sur{c}_x\!+\alpha\sur{c}_x}{2}\biggr]\!\!\;\ket{c0} {}}
					{} + \Bigl(\idop\sox{n+m+1} \ox H\Bigr) \ket{\Psi'_2(x)}
					\mspace{-30mu}
		\\
		{}={}&
				\ket{x} \ket{0^m} \Bigl(\tfrac{1}{2} \ket{00} + \tfrac{1}{2} \ket{10} + \delta\sur{0}_x \ket{01} + \delta\sur{1}_x \ket{11} \Bigr)
		+ \ket{\psi'_x},			
	\end{aligned}
\end{equation}
where $\ket{\psi'_x} = \bigl(\idop\sox{n+m+1} \ox H\bigr) \ket{\Psi'_2(x)}$, and satisfies $(\idop\sox{n} \ox \bra{0^m} \ox \idop\sox{2}) \ket{\psi'_x} = \vec 0$.
Thus, in the component of $\ket{\psi_x} = U_n \ket{x}\ket{0^{m+2}}$ where the branching register is in the state $\ket{0^m}$ and the final bit is in the state $\ket{1}$, the amplitudes are given by the gap amplitudes of the machines $\mathbf N_0$ and $\mathbf N_1$.
Furthermore, by the fact that $(\mathbf N_0,\mathbf N_1)$ are dual, exactly one of $\delta\sur{0}_x, \delta\sur{1}_x$ is non-zero.
Specifically, $\delta\sur{0}_x = 0$ if $x \in L$, and $\delta\sur{1}_x = 0$ if $x \notin L$.
If we use the notation $L(x) = 0$ to indicate $x \notin L$ and $L(x) = 1$ to indicate $x \in L$, we then have $\delta\sur{L(x)}_x \ne 0$, and we may express
\begin{equation}
	\begin{aligned}[b]
	\label{eqn:final-Un}
	\ket{\psi_x} :=
				\ket{x} \ket{0^m} \Bigl(\tfrac{1}{2} \ket{00} + \tfrac{1}{2} \ket{10} + \delta\sur{L(x)}_x \ket{L(x)}\ket{1} \Bigr)
		+ \ket{\psi'_x}.
	\end{aligned}
\end{equation}
Note that $\bracket{\psi_x}{\psi_x} = 1$, as $U_n$ is unitary, and that measuring $\ket{\psi_x}$ in the standard basis yields ${\ket{x}\ket{0^m}\ket{00}}$ or ${\ket{x}\ket{0^m}\ket{10}}$ with probability $\tfrac{1}{4}$ each.
It then follows that $\bracket{\psi'_x}{\psi'_x} < \tfrac{1}{2}$.

\subsection{Quasi-quantum algorithms by amplitude forcing}
\label{sec:ZQPGL-and-PostEQP}

The above construction produces a state in which the gap amplitudes indicate (albeit possibly with low probability) whether $x \in L$ for an input $x$.
This will allow us to easily prove that $L \in \ZQP_{\GL}$ and $L \in \Post\EQP$, and that therefore $\Delta\Ceq\P = \ZQP_{\GL} = \Post\EQP$.

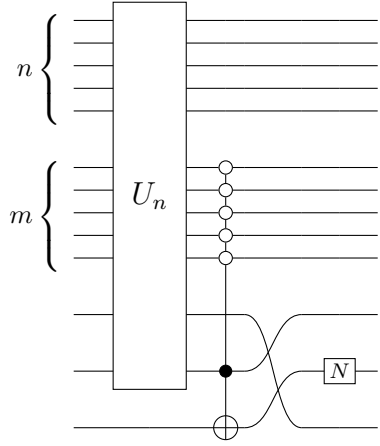
\begin{figure}
\begin{center}
\begin{tikzpicture}
		\def\J{1,2,3,4,5}
		\foreach \j in \J {%
			\ifnum\j=1
				\coordinate (x\j-0) at (0,0);
			\else
				\coordinate (x\j-0) at ($(\prev) + (0,-0.3)$);
			\fi
			\xdef\prev{x\j-0}
		}
				
		\foreach \j in \J {%
			\ifnum\j=1
				\coordinate (b\j-0) at ($(\prev) + (0,-0.75)$);
			\else
				\coordinate (b\j-0) at ($(\prev) + (0,-0.3)$);
			\fi
			\xdef\prev{b\j-0}
		}
					
		\coordinate (c-0) at ($(\prev) + (0,-0.75)$);
		\xdef\prev{c-0}

		\coordinate (a-0) at ($(\prev) + (0,-0.75)$);
		\xdef\prev{a-0}
		
		\coordinate (s-0) at ($(\prev) + (0,-0.75)$);
		\xdef\prev{s-0}
		
		\xdef\u{0}
		\foreach \t/\dt in {1/1,2/1,3/0.25,4/0.75,5/0.5,6/0.5}{
			\foreach \l in {x,b} {%
				\foreach \j in \J {%
					\coordinate (\l\j-\t) at ($(\l\j-\u) + (\dt,0)$);
					\draw [black] (\l\j-\u) -- (\l\j-\t);
				}
			}
			\foreach \l in {c,a,s} {%
				\coordinate (\l-\t) at ($(\l-\u) + (\dt,0)$);
			}
			\ifnum\t=4
				\coordinate (a-\u-5) at ($(a-\u)!0.5!(a-\t)$);
				\coordinate (s-\u-5) at ($(s-\u)!0.5!(s-\t)$);
				\coordinate (c-\u-5) at ($(c-\u)!0.5!(c-\t)$);
				\draw (a-\u) .. controls (a-\u-5) and (c-\u-5) .. (c-\t);
				\draw (c-\u) .. controls (c-\u-5) and (s-\u-5) .. (s-\t);
				\draw (s-\u) .. controls (s-\u-5) and (a-\u-5) .. (a-\t);
				\coordinate (temp) at (s-\t);
				\coordinate (s-\t) at (a-\t);
				\coordinate (a-\t) at (c-\t);
				\coordinate (c-\t) at (temp);
			\else
				\foreach \l in {c,a,s} {%
					\draw [black] (\l-\u) -- (\l-\t);
				}
			\fi
			\xdef\u{\t}
		}
			
		\node (x) at ($(x1-0)!0.5!(x4-0) + (0,-0.2)$) [anchor=east] {$n \left\{\mathclap{\begin{matrix} \\[7ex] \end{matrix}}\right.$};
		\node (b) at ($(b1-0)!0.5!(b4-0) + (0,-0.2)$) [anchor=east] {$m \left\{\mathclap{\begin{matrix} \\[7ex] \end{matrix}}\right.$};

		\node (x1-1) at (x1-1) {};
		\node (a-1) at (a-1) {};
		\node (U) at ($(x1-1)!0.5!(a-1)$) {\large$ U_n $};
		\node [draw,fill=white,fit=(x1-1)(a-1)(U)] {};
		\node (U) at (U) {\large$ U_n $};


		\filldraw [black] (a-2) circle (2.25pt);
		\draw [black] (s-2) circle (4.5pt);
		\draw ($(s-2) + (0,-4.5pt)$) -- (b1-2);
		\foreach \j in \J {	\filldraw [white] (b\j-2) circle (2.5pt); \draw [black] (b\j-2) circle (2.5pt); }

		\node [draw,fill=white,inner sep=2pt] at (s-5) {\footnotesize$N$};
		
\end{tikzpicture}
\end{center}
\caption{%
	Diagram for a circuit for an arbitrary language $L \in \Delta \Ceq\P$, consisting of unitary gates together with one non-unitary operation $N = \diag(p,1)$, for $0 \le p < 1$.
	This circuit implements either a $\ZQP_{\GL}$ algorithm or a $\Post\EQP$ algorithm, depending on the choice of $p$.
	The unitary circuit $U_n$ is as depicted in Figure~\ref{fig:unitaryCircuit}; the multiply-controlled NOT gate is conditioned on the its final control being in the state $|1\rangle$ and its other controls being in the state $|0^m\rangle$.
	\textbf{(a)}~For $p = 2^{-m}$, we have $N = B^{m}$ for $B = \diag(\tfrac{1}{2},1)$, so that the above circuit may be simulated using only classical operations, Hadamards, and $B$ gates.
	In this case, the second-to-last qubit produces the state $|1\rangle$ with probability greater than $\tfrac{1}{2}$, in which case the final bit is $|L(x)\rangle$ with certainty.
	\textbf{(b)}~For $p = 0$, we have $N = |1\rangle\langle1|$ and the above circuit describes a unitary circuit with a single post-selection on the second-to-last qubit.
	This post-selection gives rise to a well-defined state in which the final qubit is $|L(x)\rangle$ with certainty.
}
\label{fig:zeroErrorCircuit}
\end{figure}

The second-to-last bit of $\ket{\psi_x}$ stores $L(x)$ in the component where the branching register takes the state $\ket{0^m}$ and the final bit is in the state $\ket{1}$.
By using a multiply-controlled NOT conditioned on these states, we may indicate in a single bit whether or not $L(x)$ has been successfully computed.
Consider the circuit in Figure~\ref{fig:zeroErrorCircuit}, which computes such a bit, performs a non-unitary single-bit operation
$N = \diag(p,1)$
for some $0 \le p < 1$, and performs a permutation so that the final bit presents (with some probability) the outcome $L(x)$.
Given an input of $\ket{x}\ket{0^{m+3}}$ the output of this circuit is
\begin{equation}
	\label{eqn:ZQPandPostEqpFinalState}
  \ket{\Psi_{\!\!\:L,x}} = 
				\ket{x} \ket{0^m} \Bigl(\tfrac{p}{2} \ket{000} + \tfrac{p}{2} \ket{001} + \delta\sur{L(x)}_x \ket{11} \ket{L(x)} \Bigr) + p \ket{\psi''_x},			
\end{equation}
where $\ket{\psi''_x}$ is the result of cyclically permuting the final three qubits of $\ket{\psi'_x}\ket{0}$.
(In particular, the second-to-last qubit is in the state $\ket{0}$.)
We may then describe $\ZQP_{\GL}$ and $\Post\EQP$ algorithms for $L$, as follows.

\subsubsection{Zero-error algorithms by invertible gap amplification}

To consider a zero-error algorithm, consider the projectors ${\Pi_F = \ket{0}\bra{0}}$ and ${\Pi_S = \ket{1}\bra{1}}$ performed on the second-to-last qubit, and consider the effect of a projective $\{\Pi_F, \Pi_S\}$ measurement.
We interpret an outcome of $\Pi_F$ as a failure of the algorithm to produce an answer; on an outcome of $\Pi_S$, we produce the value of the final bit as output of the algorithm.
By construction, when we obtain the outcome $\Pi_S$, the value of the final bit will be $\ket{L(x)}$.
This is then a zero-error algorithm for any value of $p$.
To ensure that this zero-error algorithm has a bounded probability of failure, it suffices to bound the probability of obtaining the outcome $\Pi_F$ from above.

To determine the probability of any particular outcome from $\ket{\Psi_{\!\!\:L,x}}$, we renormalise the vector and then apply the Born rule.
Thus the probability of any measurement outcome is determined by the ratio of the modulus-squared of its amplitude, relative to that of all other amplitudes.
We have $\bracket{\psi''_x}{\psi''_x} < \tfrac{1}{2}$ from the remarks at the end of Section~\ref{sec:genericAcceptanceGaps}; thus
\begin{equation}
  \Bigl\| \Pi_F \ket{\Psi_{\!\!\:L,x}} \Bigr\|^2 \;<\; \frac{p^2}{4} + \frac{p^2}{4} + \frac{p^2}{2} \;=\; p^2,
\end{equation}
whereas $\bigl\| \Pi_S \ket{\Psi_{\!\!\:L,x}} \bigr\|^2 = \delta\sur{L(x)}_x\big.^{\;2}$.
Thus we have
\begin{equation}
  \Bigl\| \Pi_F \ket{\Psi_{\!\!\:L,x}} \Bigr\|^2 \bigg/ \Bigl\| \Pi_S \ket{\Psi_{\!\!\:L,x}} \Bigr\|^2 <
	\Bigl(p \big/ \delta\sur{L(x)}_x \Bigr)^2\,.
\end{equation}
If we bound the right-hand side from above by $1$, the probability of failure will then be less than $\tfrac{1}{2}$.
Given that $\delta\sur{L(x)}_x \ge 2^{-m}$, we may let 
$
	p \le 2^{-m}
$, 
which suffices to ensure $p \le \delta_x\sur{L(x)}$.
If we let $N = B^{m}$, where 
\begin{equation}
	\label{eqn:probabilityBoostOptor}
	B = \begin{bmatrix} \tfrac{1}{2} \!&\! 0\; \\[1ex] 0 \!&\! 1\; \end{bmatrix},
\end{equation}
then $p = 2^{-m}$, so that $\Pi_F$ occurs with probability less than $\tfrac{1}{2}$.
Thus we have a zero-error polytime algorithm (using $2m \in O(\poly n)$ non-unitary invertible gates) with a bounded probability of failure, so that $L \in \ZQP_{\GL}$.
By Proposition~\ref{prop:ZQP-GL-upperBound}, we then have:

\begin{theorem}\label{thm:ZQPGL}
  $\ZQP_{\GL} = \Delta\Ceq\P$.
\end{theorem}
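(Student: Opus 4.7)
The plan is to establish the reverse inclusion $\Delta\Ceq\P \subset \ZQP_{\GL}$, since $\ZQP_{\GL} \subset \Delta\Ceq\P$ is already handed to us by Proposition~\ref{prop:ZQP-GL-upperBound}. Given $L \in \Delta\Ceq\P$, I would first invoke the earlier lemma to obtain a pair of dual NTMs $(\mathbf N_0, \mathbf N_1)$ for $L$, with branching strings of length $m = q(n) \in O(\poly n)$, and let $V_n^{(0)}, V_n^{(1)}$ denote their verifier functions in the sense of Section~\ref{sec:verifierFunctions}.

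Next I would use the unitary circuit $U_n$ of Figure~\ref{fig:unitaryCircuit} to prepare, on input $\ket{x}\ket{0^{m+2}}$, the state $\ket{\psi_x}$ whose $\ket{x}\ket{0^m}\ket{c}\ket{1}$ component carries amplitude $\delta_x^{(c)}$, together with the already-justified bound $\bracket{\psi'_x}{\psi'_x} < \tfrac12$ on the ``off-$0^m$'' contribution. I would then append the circuit of Figure~\ref{fig:zeroErrorCircuit}: a multiply-controlled NOT (onto a fresh ancilla), the non-unitary gate $N = B^{m}$ with $B$ as in Equation~\eqref{eqn:probabilityBoostOptor}, and the cyclic permutation on the last three qubits. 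The resulting unnormalised state is exactly $\ket{\Psi_{L,x}}$ of Equation~\eqref{eqn:ZQPandPostEqpFinalState} with $p = 2^{-m}$, and every non-unitary gate used (namely $B$) comes from a finite invertible set.

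To finish, I would take $\Pi_S = \ket1\bra1$ and $\Pi_F = \ket0\bra0$ on the second-to-last qubit, and $\Pi_0, \Pi_1$ on the output qubit, and verify the three conditions in the definition of $\ZQP_{\GL}$. Zero error on both branches follows directly from the structure of $\ket{\Psi_{L,x}}$: on the $\Pi_S$ subspace the output qubit is deterministically $\ket{L(x)}$, because exactly one of $\delta_x^{(0)}, \delta_x^{(1)}$ is nonzero and the ancilla-flipping step has zeroed out the second-to-last qubit on the $\ket{\psi''_x}$ contribution. For the success probability, the key estimate is $\|\Pi_F \ket{\Psi_{L,x}}\|^2 < p^2$ versus $\|\Pi_S \ket{\Psi_{L,x}}\|^2 = (\delta_x^{(L(x))})^2$, combined with the integer lower bound $\delta_x^{(L(x))} \ge 2^{-m} = p$, which forces the conditional success probability past $\tfrac12$ after renormalisation.

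The only subtle point, and the one I would be most careful with, is the success-probability step: the bound $\delta_x^{(L(x))} \ge 2^{-m}$ must hold on the normalised vector $\ket{\psi_x}$ in the gap-amplitude convention of Section~\ref{sec:gapAmplitudes}, which is why the $2^{-m}$ denominators in the definitions of $\alpha_x^{(c)}, \rho_x^{(c)}, \delta_x^{(c)}$ are matched exactly by the choice $N = B^m$. Once these factors are lined up, $\ZQP_{\GL} \supset \Delta\Ceq\P$ follows, and combining with Proposition~\ref{prop:ZQP-GL-upperBound} gives the theorem.
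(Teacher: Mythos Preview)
Your proposal is correct and follows essentially the same approach as the paper: you invoke Proposition~\ref{prop:ZQP-GL-upperBound} for one containment, and for the other you build the circuit of Figure~\ref{fig:zeroErrorCircuit} on top of $U_n$ with $N = B^m$, then bound $\|\Pi_F\ket{\Psi_{L,x}}\|^2 < p^2$ against $\|\Pi_S\ket{\Psi_{L,x}}\|^2 = \bigl(\delta_x^{(L(x))}\bigr)^2$ using $\lvert\delta_x^{(L(x))}\rvert \ge 2^{-m} = p$. The only cosmetic point is that $\delta_x^{(L(x))}$ need not be positive, so the lower bound should be stated as $\lvert\delta_x^{(L(x))}\rvert \ge 2^{-m}$; since only its square enters the estimate, this changes nothing.
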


\subsubsection{Forcing exactness by post-selection} 

The above zero-error algorithm suppresses the probability of failure by a scalar factor $p \ll 1$.
Taking this idea to its logical limit (\ie~setting $N = B^t$ and letting $t \to \infty$) is equivalent to postselecting the value $\ket{1}$ on the second-to-last qubit of $\ket{\Psi_{\!\!\:L,x}}$.
In other words, taking $N = \ket{1}\bra{1}$ yields an algorithm with a well-defined final state after renormalisation, and in which the final bit is certain to be in the state $\ket{L(x)}$.
This yields an exact polynomial-time (but postselective) algorithm for $L$, so that $L \in \Post\EQP$.
Together with Proposition~\ref{prop:PostEQP-upperBound}, we then have:
\begin{theorem}\label{thm:PostEQP}
	$\Post\EQP = \Delta \Ceq\P$. 
\end{theorem}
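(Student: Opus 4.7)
The plan is to establish the nontrivial containment $\Delta\Ceq\P \subseteq \Post\EQP$, since the reverse direction is Proposition~\ref{prop:PostEQP-upperBound}. I will reuse the circuit constructed for Theorem~\ref{thm:ZQPGL}, but take the non-unitary operation $N = \diag(p,1)$ in the limit $p = 0$: in this limit, $N$ reduces to the projector $\ket{1}\bra{1}$, which is operationally identical to post-selecting on that qubit being in the state $\ket{1}$. Under this substitution, the $\ZQP_{\GL}$ algorithm of Section~\ref{sec:ZQPGL-and-PostEQP} becomes a $\Post\EQP$ algorithm employing only unitary gates together with a single post-selection.

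Concretely: given $L \in \Delta\Ceq\P$ with dual NTMs $(\mathbf N_0, \mathbf N_1)$, I instantiate the circuit of Figure~\ref{fig:zeroErrorCircuit} with the final gate $N$ replaced by post-selection on the second-to-last qubit being in the state $\ket{1}$ (taking $\Pi_S = \ket{1}\bra{1}$ on that qubit). By Eqn.~\eqref{eqn:ZQPandPostEqpFinalState} specialised to $p = 0$, the unnormalised state after post-selection on input $\ket{x}\ket{0^{m+3}}$ is
\begin{equation}
	\ket{\Psi_{\!\!\:L,x}} \;=\; \delta\sur{L(x)}_x \;\ket{x}\ket{0^m}\ket{11}\ket{L(x)},
\end{equation}
since every component of the $\ket{\psi''_x}$ contribution and of the $\tfrac{p}{2}\ket{00c}$ terms carries a factor of $p$ and so vanishes.

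Verification then requires three observations: first, the post-selection event has nonzero probability because $\delta\sur{L(x)}_x \ne 0$ by the duality of $(\mathbf N_0, \mathbf N_1)$ (indeed $|\delta\sur{L(x)}_x| \ge 2^{-m}$); second, after renormalisation the final qubit is exactly $\ket{L(x)}$, so that $\bra{\Psi}\Pi_0 \Pi_S\ket{\Psi} = 0$ for \YES\ instances and $\bra{\Psi}\Pi_1 \Pi_S\ket{\Psi} = 0$ for \NO\ instances, as required by the definition of $\Post\EQP$; third, the circuit consists of unitary gates from a finite basis and is polynomial-time uniform, inherited from $U_n$. Combining this with Proposition~\ref{prop:PostEQP-upperBound} then yields the theorem. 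I do not anticipate a genuine technical obstacle here: the heavy lifting was done in Section~\ref{sec:genericAcceptanceGaps} in constructing $U_n$, and the key conceptual step is simply to recognise that driving $p$ to zero in the ``amplitude forcing'' scheme recovers exact (rather than merely zero-error) computation, at the cost of replacing the physical non-unitary gate by post-selection on an event of amplitude $\delta\sur{L(x)}_x$.
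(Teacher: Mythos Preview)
Your proposal is correct and follows essentially the same approach as the paper: the paper likewise sets $p = 0$ (equivalently, $N = \ket{1}\bra{1}$) in the circuit of Figure~\ref{fig:zeroErrorCircuit}, observes that the resulting post-selected state is proportional to $\ket{x}\ket{0^m}\ket{11}\ket{L(x)}$ with nonzero amplitude $\delta\sur{L(x)}_x$, and combines this with Proposition~\ref{prop:PostEQP-upperBound}. Your verification points (nonzero post-selection probability, exactness of the output qubit, and finite unitary gate-set) are exactly the ingredients the paper relies on.
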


\section{Quasi-quantum algorithms for $\LWPP$ and $\LPWPP$}

Postselected quantum computations can achieve exactness, in effect, by the ignoring the probability of failure of a zero-error algorithm.
Achieving exact quantum algorithms with invertible operations is subtler, as we show in this section.
The set of problems which exact nonunitary algorithms can solve is apparently smaller, even if one considers algorithms with a possibly infinite (but polynomial-time specifiable) gate-set.

\subsection{An invertible circuit to construct gap amplitudes}

We first define a circuit $W_n$ as in Figure~\ref{fig:skew}, which builds upon the unitary circuit $U_n$ described in Figure~\ref{fig:unitaryCircuit} and uses a single non-unitary gate
\begin{equation}
	\label{eqn:elemRowOpn}
  S = \mbox{\small$\begin{bmatrix} 1 & 1 \\ 0 & 1 \end{bmatrix}$}.
\end{equation}%
This circuit constructs the gap amplitude $\delta_x\sur{L(x)}$ for a component of its output in much the same way that $U_n$ does, and also uses $U_n\herm$ to uncompute the ``garbage'' components contained in the state $\ket{\Psi_{\!\!\:L,x}}$ described in Eqn.~\eqref{eqn:ZQPandPostEqpFinalState}.
This will simplify the analysis to follow for exact invertible algorithms.
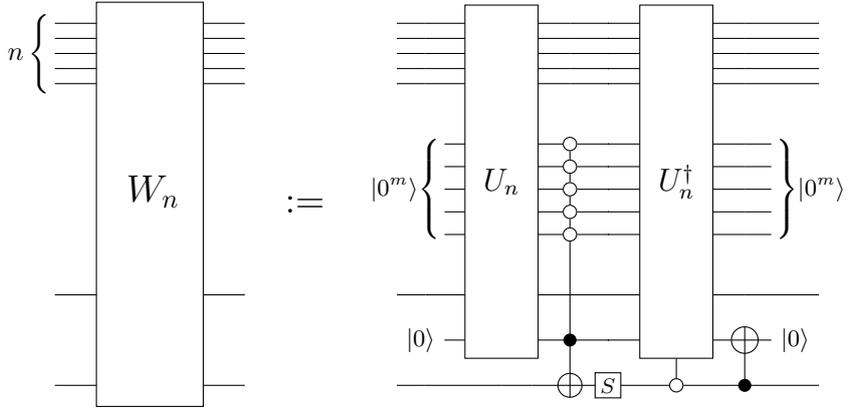
\begin{figure}
\begin{center}
\hspace*{-2.2em}
\begin{tikzpicture}
		\def\J{1,2,3,4,5}
		\foreach \j in \J {%
			\ifnum\j=1
				\coordinate (x\j-0) at (0,0);
			\else
				\coordinate (x\j-0) at ($(\prev) + (0,-0.2)$);
			\fi
			\xdef\prev{x\j-0}
		}

		\foreach \j in \J {%
			\ifnum\j=1
				\coordinate (b\j-0) at ($(\prev) + (0,-0.8)$);
			\else
				\coordinate (b\j-0) at ($(\prev) + (0,-0.3)$);
			\fi
			\xdef\prev{b\j-0}
		}
					
		\coordinate (c-0) at ($(\prev) + (0,-0.8)$);
		\xdef\prev{c-0}

		\coordinate (a-0) at ($(\prev) + (0,-0.6)$);
		\xdef\prev{a-0}
	
		\coordinate (s-0) at ($(\prev) + (0,-0.6)$);
		\xdef\prev{s-0}


		\xdef\u{0}
		\foreach \t/\dt in {1/1,2/0.4,3/0.5,4/0.5,5/0,6/0,7/0.9,8/0,9/0.9,10/0.5} {
			\foreach \j in \J {%
				\coordinate (x\j-\t) at ($(x\j-\u) + (\dt,0)$);
				\draw [black] (x\j-\u) -- (x\j-\t);
				\coordinate (b\j-\t) at ($(b\j-\u) + (\dt,0)$);
				\ifnum\u<1
					\coordinate (b\j-0) at ($(b\j-0) + (0.25,0)$);
				\else\ifnum\t>9
					\coordinate (b\j-10) at ($(b\j-10) + (-0.15,0)$);
				\fi\fi
				\draw [black] (b\j-\u) -- (b\j-\t);
			}
			\foreach \l in {c,a,s} {%
				\coordinate (\l-\t) at ($(\l-\u) + (\dt,0)$);
				\if a\l
					\ifnum\u<1
						\coordinate (a-0) at ($(a-0) + (0.25,0)$);
					\else\ifnum\t>9
						\coordinate (a-10) at ($(a-10) + (-0.15,0)$);
				\fi\fi\fi
				\draw [black] (\l-\u) -- (\l-\t);
			}
			\xdef\u{\t}
		}
			
		\node (x1-1) at (x1-1) {};
		\node (a-1) at (a-1) {};
		\node (U) at ($(x1-1)!0.5!(a-1)$) {\large$ U_n $};
		\node [draw,fill=white,fit=(x1-1)(a-1)(U)] {};
		\node (U) at (U) {\large$ U_n $};

 		
		\filldraw [black] (a-3) circle (2.25pt);
		\draw [black] (s-3) circle (4.5pt);
		\draw ($(s-3) + (0,-4.5pt)$) -- (b1-3);
		\foreach \j in \J {	\filldraw [white] (b\j-3) circle (2.5pt); \draw [black] (b\j-3) circle (2.5pt); }

		\node [draw,fill=white,inner sep=2pt] at (s-4) {\footnotesize$S$};



		\node (x1-7) at (x1-7) {};
		\node (a-7) at (a-7) {};
		\node (U) at ($(x1-7)!0.5!(a-7)$) {\large$ U_n $};
		\node (U) [draw,fill=white,fit=(x1-7)(a-7)(U)] {};
		\node at ($(x1-7)!0.5!(a-7)$) {\large$ U_n\herm $};
		\draw (s-7) -- (U);
		\filldraw [white] (s-7) circle (2.5pt);
		\draw [black] (s-7) circle (2.5pt);
				

		\filldraw [black] (s-9) circle (2.25pt);
		\foreach \l in {a}{
			\filldraw [white] (\l-9) circle (5pt);
			\draw [black] (\l-9) circle (5pt);
			\draw ($(\l-9) + (-5pt,0)$) -- ++(10pt,0);
		}
		\draw (s-9) -- ($(a-9) + (0,5pt)$); 

		\node (ai) at (a-0) [anchor=east] {\small$\ket{0}$};		
		\node (bi) at ($(b1-0)!0.5!(b5-0)$) [anchor=east] {\small$\;\;\ket{0^m} \!\left\{\mathclap{\begin{matrix} \\[7ex] \end{matrix}}\right.\!\!$};		

		\node (af) at (a-\u) [anchor=west] {\small$\ket{0}$};		
		\node (bf) at ($(b1-\u)!0.5!(b5-\u)$) [anchor=west] {\small$\!\!\left. \mathclap{\begin{matrix} \\[7ex] \end{matrix}}\right\}\! \ket{0^m}\;\;$};

		\foreach \j in \J {%
			\draw (x\j-0) -- (x\j-0 -| ai.west);
			\draw (x\j-\u) -- (x\j-\u -| af.east);
			\coordinate (x\j-0) at (x\j-0 -| ai.west);
			\coordinate (x\j-\u) at (x\j-\u -| af.east);
		}
		\foreach \l in {c,s} {%
			\draw (\l-0) -- (\l-0 -| ai.west);
			\draw (\l-\u) -- (\l-\u -| af.east);
			\coordinate (\l-0) at (\l-0 -| ai.west);
			\coordinate (\l-\u) at (\l-\u -| af.east);
		}

		\foreach \t/\Dt in {1/-4.5,2/-3.25,3/-2} {
			\foreach \l in {x} {%
				\foreach \j in \J {%
					\coordinate (U\l\j-\t) at ($(\l\j-0) + (\Dt,0)$);
					\ifnum\t>1
						\draw (U\l\j-\u) -- (U\l\j-\t);
					\fi
				}
			}
			\foreach \l in {c,s} {%
				\coordinate (U\l-\t) at ($(\l-0) + (\Dt,0)$);
					\ifnum\t>1
						\draw (U\l-\u) -- (U\l-\t);
					\fi
			}
			\xdef\u{\t}
		}
		\node (Ux1-2) at (Ux1-2) {$X$};
		\node (Us-2) at (Us-2) {$X$};
		\node (U) [draw,fill=white,inner sep=1pt,minimum width=4em,fit=(Ux1-2)(Us-2)]
			{{\Large$\!\!\!\!W_n\!\!\!\!\!$}};
		\node (Ux) at ($(Ux1-1)!0.5!(Ux5-1)$) [anchor=east] {$n \left\{\mathclap{\begin{matrix} \\[4ex] \end{matrix}}\right.$\!\!\!\:};

		\node at ($(x1-0)!0.5!(Us-3)$) {\Large\!\!\!$:=$\,\,};
\end{tikzpicture}
\hspace*{-2.2em}
\end{center}
\caption{%
	Diagram for a circuit $W_n$ consisting of unitary gates, together with one non-unitary operation $S$, to evaluate gap functions corresponding to two distinct poly-time nondeterministic Turing machines $\mathbf N_0$ and $\mathbf N_1$.
	The circuit $U_n$, and its relationship to the two nondeterministic machines $\mathbf N_0$ and $\mathbf N_1$, are presented in Figure~\ref{fig:unitaryCircuit}.
	The $S$ gate is a non-unitary transformation given by $S = \bigl[\protect\begin{smallmatrix} 1 & 1 \\ 0 & 1 \protect\end{smallmatrix}\bigr]$.
	Operations with controls marked by white circles are (coherently) conditioned on the qubit being in the state $|0\rangle$; operations with controls marked by black circles are conditioned on the qubit being in the state $|1\rangle$
	The qubits with explicitly indicated input and output states are ancillas.
}
\label{fig:skew}
\end{figure}%

As before, consider a dual pair of verifier functions $V_n\sur{0}$ and $V_n\sur{1}$, corresponding to dual NTMs $(\mathbf N_0, \mathbf N_1)$ acting on inputs of length $n$, and let $m = q(n)$ be the size of the branching strings accepted by these verifiers.
Let $U_n$ be as described in Figure~\ref{fig:unitaryCircuit}, defined in terms of functions $V_n\sur{0}$ and $V_n\sur{1}$.
On an input of $\ket{x}\ket{0}\ket{0}$, the circuit first introduces several ancilla bits to produce $\ket{x}\ket{0^{m+3}}$.
Acting on this with the operator $U_n \ox \idop$ then yields a state
\begin{equation}
	\begin{aligned}[b]
		\ket{\Phi_1(x)} \,&=\,	\ket{\psi_x} \ket{0}
		\\&
			\,=\,
			\ket{x} \ket{0^m} \Bigl( \tfrac{1}{2}\ket{00} + \tfrac{1}{2}\ket{10} + \delta\sur{L(x)}_x\ket{L(x)}\ket{1} \Bigr)\ket{0}
			+ \ket{\psi'_x} \ket{0},			
	\end{aligned}
\end{equation}
following Eqn.~\eqref{eqn:final-Un}.
In particular, we have $(\idop\sox{n} \ox \bra{0^m} \ox \idop\sox{2}) \ket{\psi'_x} = \vec 0$.
Then after the multiply controlled-NOT gate, we obtain the state
\begin{equation}
	\!\!\!\!\!\!
	\begin{aligned}[b]
		\ket{\Phi_2(x)} ={}&	
			\ket{x} \!\ket{0^m} \!\Bigl( \!\!\;\tfrac{1}{2}\!\!\:\ket{00} \!\!\;{+} \tfrac{1}{2}\!\!\:\ket{10} \!\!\;\Bigr) \!\!\!\;\ket{0} + \ket{\psi'_x} \! \ket{0}
		+\delta\sur{L(x)}_x \ket{x} \! \ket{0^m} \! \ket{L(x)} \! \ket{1} \! \ket{1} \!.
	\end{aligned}\mspace{-24mu}
\end{equation}
The $S$ gate linearly (but non-unitarily) maps ${\ket{0} \mapsto \ket{0}}$ and ${\ket{1} \mapsto \ket{0} + \ket{1}}$; thus the state after the $S$ operation on the final bit is
\begin{equation}
	\begin{aligned}[b]
		\ket{\Phi_3(x)} ={}&	
			\ket{x} \ket{0^m} \Bigl( \tfrac{1}{2}\ket{00} + \tfrac{1}{2}\ket{10} + \delta\sur{0}_x\ket{01} + \delta\sur{1}_x\ket{11} \Bigr)\ket{0} + \ket{\psi'_x} \ket{0}
		\\[-0.25ex]&
				+\delta\sur{L(x)}_x \ket{x}\ket{0^m} \ket{L(x)} \ket{1} \ket{1}.
		\\[0.5ex] ={}& \ket{\psi_x}\ket{0} 
				+\delta\sur{L(x)}_x \ket{x}\ket{0^m} \ket{L(x)} \ket{1} \ket{1}.
	\end{aligned}
\end{equation}
Conditioned on the final bit being $\ket{0}$, we coherently perform $U_n\herm$, and flip the second-to-last qubit conditioned on the final bit being $\ket{1}$.
This produces the state
\begin{equation}
	\begin{aligned}[b]
		\ket{\Phi_4(x)} ={}&	\ket{x}\ket{0^m}\ket{000} + \delta\sur{L(x)}_x \ket{x}\ket{0^m} \ket{L(x)} \ket{0}
		\ket{1};
	\end{aligned}
\end{equation}
removing the ancilla qubits, this yields a final state of
\begin{equation}
	\label{eqn:final-Wn-LWPP}
	\begin{aligned}[b]
		\ket{\varphi_x} ={}&	\ket{x}\Bigl(\ket{00} + \delta\sur{L(x)}_x\ket{L(x)}\ket{1}\Bigr).
	\end{aligned}
\end{equation}
A single non-unitary operation thus suffices to uncompute all components of the output, except for the $\ket{00}$ component and a component which indicates whether $x \in L$.

\subsection{Exact invertible algorithms}

If $(\mathbf N_0, \mathbf N_1)$ are dual \LWPP\ machines, there is a poly-time computable and length-dependent function $h$, for which $\delta\sur{L(x)}_x = \tfrac{1}{2^m} h(x)$.
The ability to compute $h(x)$ exactly and in polynomial time allows us to suppress the amplitude of the $\ket{00}$ component precisely, by interfering it with a (non-unitary) contribution from the component which indicates $L(x)$.
This allows us to easily prove that $L \in \GLP[\C]$, and that furthermore $L \in \EQP_{\GL}$ in the case that $L \in \LPWPP$.

\subsubsection{Polytime-specifiable invertible circuits for $L \in \LWPP$}

Figure~\ref{fig:decideLWPP} presents a polytime uniform (and polytime-specifiable) invertible circuit to exactly decide a language $L \in \LWPP$, in the case that $(\mathbf N_0, \mathbf N_1)$ are dual \LWPP\ machines.
In addition to the subroutine $W_n$ presented in Figure~\ref{fig:skew}, this circuit uses one gate which depends on the input size:
\begin{equation}
  A_n = \text{\small$\begin{bmatrix} h(1^n) & 0\;\; \\ 0 & 1\;\; \end{bmatrix}$}.
\end{equation}
By the choice of $h$, this is a family of invertible single-bit operations whose coefficients are computable in time $O(\poly n)$; it is therefore poly-time specifiable in the sense of Definition~\ref{def:UnitaryP}.
The algorithm also involves the two-bit invertible gate
\begin{equation}
	\label{eqn:twoBitDestrOptor}
  D = \text{\small$\begin{bmatrix}
				1	& \phantom{\!\!\!-}0 	& \!\!						-1	&		\!\!			-1	\\
				0	& \phantom{\!\!\!-}1	& \phantom{\!\!\!-}0	& \phantom{\!\!\!-}0	\\
				0	&	\phantom{\!\!\!-}0	&	\phantom{\!\!\!-}1	&	\phantom{\!\!\!-}0	\\
				0	&	\phantom{\!\!\!-}0	&	\phantom{\!\!\!-}0	&	\phantom{\!\!\!-}1
			\end{bmatrix}$},
\end{equation}
which linearly maps $\ket{0x} \mapsto \ket{0x}$ and $\ket{1x} \mapsto \ket{1x} - \ket{00}$; and the invertible operator $B = \diag(\tfrac{1}{2},1)$ described in Eqn.~\eqref{eqn:probabilityBoostOptor}.
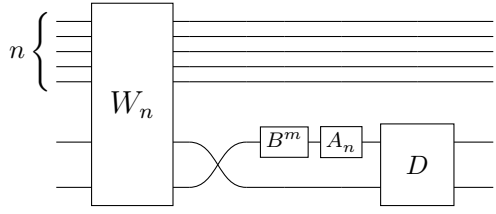
\begin{figure}
\begin{center}
\hspace*{-2.3em}
\begin{tikzpicture}
		\def\J{1,2,3,4,5}
		\foreach \j in \J {%
			\ifnum\j=1
				\coordinate (x\j-0) at (0,0);
			\else
				\coordinate (x\j-0) at ($(\prev) + (0,-0.2)$);
			\fi
			\xdef\prev{x\j-0}
		}

		\coordinate (c-0) at ($(\prev) + (0,-0.8)$);
		\xdef\prev{c-0}

		\coordinate (s-0) at ($(\prev) + (0,-0.6)$);
		\xdef\prev{s-0}
				
		\xdef\u{0}
		\foreach \t/\dt in {1/1,2/0.75,3/0.75,4/0.5,5/0.75,6/1,7/1}{
			\foreach \l in {x} {%
				\foreach \j in \J {%
					\coordinate (\l\j-\t) at ($(\l\j-\u) + (\dt,0)$);
					\draw [black] (\l\j-\u) -- (\l\j-\t);
				}
			}
			\foreach \l in {c,s} {%
				\coordinate (\l-\t) at ($(\l-\u) + (\dt,0)$);
			}
			\ifnum\t=3
				\coordinate (s-\u-5) at ($(s-\u)!0.5!(s-\t)$);
				\coordinate (c-\u-5) at ($(c-\u)!0.5!(c-\t)$);
				\draw (c-\u) .. controls (c-\u-5) and (s-\u-5) .. (s-\t);
				\draw (s-\u) .. controls (s-\u-5) and (c-\u-5) .. (c-\t);
				\coordinate (temp) at (c-\t);
				\coordinate (c-\t) at (s-\t);
				\coordinate (s-\t) at (temp);
			\else
				\draw [black] (c-\u) -- (c-\t);
				\draw [black] (s-\u) -- (s-\t);
			\fi
			\xdef\u{\t}
		}
			
		\node (x1-1) at (x1-1) {};
		\node (s-1) at (s-1) {};
		\node (W) at ($(x1-1)!0.5!(s-1)$) {\large$ W_n $};
		\node [draw,fill=white,fit=(x1-1)(s-1)(W)] {};
		\node (W) at (W) {\large$ W_n $};

		
		\node [draw,fill=white,inner sep=2pt] at (s-4) {\footnotesize$B^m_{\phantom\cdot}$};

		\node [draw,fill=white,inner sep=2pt] at (s-5) {\footnotesize$A_n^{\phantom\cdot}$};
		
		\node (c-6) at (c-6) {};
		\node (s-6) at (s-6) {};
		\node (D) at ($(c-6)!0.5!(s-6)$) {$ D_n $};
		\node (D') [draw,fill=white,fit=(c-6)(s-6)(D)] {};
		\node (D) at (D) {$ D $};

		\node (x) at ($(x1-0)!0.5!(x5-0)$) [anchor=east] {$n \left\{\mathclap{\begin{matrix} \\[4ex] \end{matrix}}\right.$\!\!\!\:};
\end{tikzpicture}
\end{center}
\caption{%
	Diagram for a circuit consisting of invertible gates to exactly decide a language $L \in \LWPP$.
	The invertible circuit $W_n$ is as depicted in Figure~\ref{fig:skew}, and depends on nondeterministic Turing machines representing $\LWPP$ algorithms for $L$ and $\overline L$.
	The gate $B$ is the same as in Eqn.~\eqref{eqn:probabilityBoostOptor}, and $D$ is given by Eqn.~\eqref{eqn:twoBitDestrOptor}.
	The gate $A_n$ acts on a single bit, and is polytime-specifiable from the input size $n$: in case we have $L \in \LPWPP$, we have $A_n = G^{O(\poly n)}$\!, for some single-qubit gate $G$ which is constant in the input size.
	On input $|x\rangle|0\rangle|0\rangle$, the output of this circuit is proportional to $|x\rangle|1\rangle|L(x)\rangle$.
}
\label{fig:decideLWPP}
\end{figure}%

Let $\ket{\varphi'_x}$ represent the result of swapping the final two bits of the state $\ket{\varphi_x}$ given in Eqn.~\eqref{eqn:final-Wn-LWPP}.
The effect of performing the circuit of Figure~\ref{fig:decideLWPP} on the input state $\ket{x}\ket{0}\ket{0}$ is then to produce the state
\begin{equation}
\begin{aligned}[b]
  \ket{\Phi_{L,x}} &= \Bigl(\idop\sox{n} \ox \bigl[ D (A_n B^m \ox \idop)\bigr] \Bigr) \ket{\varphi'_x}
  \\
  &= \ket{x} \ox \Bigl[ D \bigl(A_n B^m \ox \idop\bigr) \Bigl( \ket{0}\ket{0} + \delta_x\sur{L(x)} \ket{1}\ket{L(x)} \Bigr)\Bigr].
\end{aligned}
\end{equation}
The operator $A_n B^m \ox \idop$ multiplies the amplitude of $\ket{0}\ket{0}$ by $h(1^n)/2^m = h(x)/2^m$, leaving the component $\ket{1}\ket{L(x)}$ unaffected.
By construction, we have $\delta\sur{L(x)}_x = h(x)/2^m$, so i
t follows that
\begin{equation}
\begin{aligned}[b]
  \ket{\Phi_{L,x}} &= \ket{x} \ox \frac{h(x)}{2^m} \Bigl[ D \Bigl( \ket{0}\ket{0} + \ket{1}\ket{L(x)} \Bigr)\Bigr]
  = \frac{h(x)}{2^m} \ket{x} \ket{1}\ket{L(x)} .
\end{aligned}
\end{equation}
Thus the value of the final bit of the output is $\ket{L(x)}$, with certainty.
With Proposition~\ref{prop:exactInvertibleUpperBound}, we then have:
\begin{theorem}\label{thm:GLP}
  $\LWPP = \GLP[\C]$.
\end{theorem}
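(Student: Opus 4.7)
The forward inclusion $\GLP[\C] \subset \LWPP$ is already Proposition~\ref{prop:exactInvertibleUpperBound}, so my plan is to establish the reverse inclusion $\LWPP \subset \GLP[\C]$. Given any $L \in \LWPP$, I would fix a pair of dual $\LWPP$ machines $(\mathbf N_0, \mathbf N_1)$ sharing a common poly-time computable, length-dependent, nonzero function $h$, and argue that the circuit family depicted in Figure~\ref{fig:decideLWPP} is a $\GLP[\C]$ algorithm that decides $L$ exactly.

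The construction would proceed in three stages on input $\ket{x}\ket{0}\ket{0}$. First, the $W_n$ subroutine from the previous subsection would produce, by Eqn.~\eqref{eqn:final-Wn-LWPP}, a state of the form $\ket{x}\bigl(\ket{00} + \delta\sur{L(x)}_x \ket{L(x)}\ket{1}\bigr)$ on the last two qubits, where $\delta\sur{L(x)}_x = h(x)/2^m$. Second, after the (implicit) swap of the last two qubits built into Figure~\ref{fig:decideLWPP}, I would apply $B^m$ followed by $A_n := \diag(h(1^n),1)$ to the second-to-last qubit; because $B$ fixes $\ket{1}$ and scales $\ket{0}$ by $1/2$, this rescales the $\ket{00}$ term by $h(1^n)/2^m$ while leaving the signal term unchanged. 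Since $h$ is length-dependent we have $h(1^n) = h(x)$, so both amplitudes become equal to $\delta\sur{L(x)}_x$. Third, I would apply the destructive gate $D$ from Eqn.~\eqref{eqn:twoBitDestrOptor}; because $D$ subtracts $\ket{00}$ from every component in which the second-to-last qubit is $\ket{1}$, it cancels the junk term and leaves a nonzero scalar multiple of $\ket{x}\ket{1}\ket{L(x)}$, which decides $L$ exactly.

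The main obstacle will be verifying that every gate in the family is polytime-specifiable over $\C$ and invertible, as required by Definition~\ref{def:quasiQuantumClasses}. All constituents of $W_n$ (Hadamards, the coherent verifier evaluations for $V_n\sur{0}$ and $V_n\sur{1}$, the $S$ gate, the multiply-controlled NOT, and $U_n\herm$) are polytime-uniform by their earlier construction. The entries of $B^m = \diag(2^{-m},1)$ are rationals of polynomial bit-length in $n$. The only genuinely input-dependent gate is $A_n$, whose single nontrivial entry $h(1^n)$ is an integer computable in time $O(\poly n)$; the $\LWPP$ requirement that $h$ be nonzero ensures $A_n$ is invertible. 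The key conceptual point --- and the reason this bound is apparently tighter than the analogous $\Post\EQP = \Delta\Ceq\P$ characterisation --- is the availability of a \emph{single} length-dependent function $h$ governing both $\mathbf N_0$ and $\mathbf N_1$, supplied by the dual-$\LWPP$-machine lemma of Section~3.1, which is what allows the rescaling $A_n$ to make the junk and signal amplitudes coincide exactly.
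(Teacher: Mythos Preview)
Your proposal is correct and follows essentially the same approach as the paper: both invoke Proposition~\ref{prop:exactInvertibleUpperBound} for one direction, and for the other build the circuit of Figure~\ref{fig:decideLWPP} from $W_n$, the swap, the rescaling $A_n B^m$, and the destructive gate $D$, using the length-dependence of $h$ to make the junk and signal amplitudes match before cancellation. Your verification of polytime-specifiability and invertibility (in particular, that $h(1^n) \ne 0$ makes $A_n$ invertible) is exactly the content the paper leaves implicit.
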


\subsubsection{Circuits for $L \in \LPWPP$ with finite invertible gate-sets}

In the case that $L \in \LPWPP$, the function $h(x)$ for which $\delta\sur{L(x)} = h(x)/2^m$ may be taken to be a power of some constant, $
  h(x) = M^{t(x)}	$ 
  for some $M\ge1$ and poly-time computable $t(x) \in O(\poly |x|)$.
From this it follows that 
\begin{equation}
	A_n = G^{t(x)},
	\qquad\text{%
		where $G = \diag(M,1)$.
	}
\end{equation}
Thus the $A_n$ gate in Figure~\ref{fig:decideLWPP} may be simulated by polynomially many $G$ gates.
The analysis of the preceding Section then suffices to show that $L \in \LPWPP$ may be exactly decided by a circuit over the gate-set $\bigl\{X,\mbox{\small CNOT},\mbox{\small TOFFOLI},H,S,B,G,D\bigr\}$
.
Together with Proposition~\ref{prop:exactInvertibleUpperBound}, this proves:
\begin{theorem}\label{thm:EQPGL}
  $\LPWPP = \EQP_{\GL}$.
\end{theorem}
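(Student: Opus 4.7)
The plan is to leverage the construction used to prove Theorem~\ref{thm:GLP} essentially verbatim, observing that under the stronger hypothesis defining \LPWPP, the only input-dependent (and hence potentially infinite-gate-requiring) component of that circuit collapses to a polynomial sequence of applications of a single fixed gate. One direction, $\EQP_{\GL} \subset \LPWPP$, is already in hand via Proposition~\ref{prop:exactInvertibleUpperBound}, so the task is to establish the reverse containment $\LPWPP \subset \EQP_{\GL}$.

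First I would invoke the dual \LPWPP\ machines $(\mathbf N_0, \mathbf N_1)$ for a given $L \in \LPWPP$, whose gap amplitudes satisfy $\delta_x^{(L(x))} = h(x)/2^m$ with $h(x) = M^{t(x)}$ for a fixed integer $M \ge 1$ and a polynomial-time computable $t(x) \in O(\poly |x|)$. I would then instantiate the circuit of Figure~\ref{fig:decideLWPP} for these machines. The analysis from the proof of Theorem~\ref{thm:GLP} carries over unchanged: on input $\ket{x}\ket{0}\ket{0}$, the subroutine $W_n$ followed by $A_n B^m \otimes \idop$ and then $D$ produces $\frac{h(x)}{2^m}\ket{x}\ket{1}\ket{L(x)}$, deciding $L$ exactly.

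The only thing left is to verify that the circuit in fact uses a finite gate-set. The subroutine $W_n$ is composed of unitary gates from the finite set $\{X, \text{CNOT}, \text{TOFFOLI}, H\}$ together with the single non-unitary gate $S$ of Eqn.~\eqref{eqn:elemRowOpn}; the gates $B$ and $D$ in Figure~\ref{fig:decideLWPP} are also fixed, independent of $n$. The one input-size-dependent gate is $A_n = \diag(h(1^n), 1)$, but under the \LPWPP\ hypothesis this factors as $A_n = G^{t(x)}$ for the fixed single-qubit gate $G = \diag(M,1)$. Since $t(x) \in O(\poly|x|)$, we may simulate $A_n$ by polynomially many applications of $G$, yielding a polytime-uniform circuit over the finite invertible gate-set $\{X, \text{CNOT}, \text{TOFFOLI}, H, S, B, G, D\}$. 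Hence $L \in \EQP_{\GL}$, establishing $\LPWPP \subset \EQP_{\GL}$, and combined with Proposition~\ref{prop:exactInvertibleUpperBound} gives the equality.

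There is no substantive new obstacle here: the structural work was performed in Theorem~\ref{thm:GLP}, and the only step that requires care is ensuring that the replacement $A_n \rightsquigarrow G^{t(x)}$ is legitimate in the polytime-uniform circuit model, which is immediate from the polynomial bound on $t(x)$ and the fact that $G$ itself is independent of $x$.
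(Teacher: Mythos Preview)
Your proposal is correct and follows essentially the same approach as the paper: invoke Proposition~\ref{prop:exactInvertibleUpperBound} for one containment, instantiate the circuit of Figure~\ref{fig:decideLWPP} on dual $\LPWPP$ machines for the other, and observe that the sole length-dependent gate $A_n = \diag(h(1^n),1)$ factors as $G^{t(x)}$ with $G = \diag(M,1)$, yielding the finite gate-set $\{X,\text{\small CNOT},\text{\small TOFFOLI},H,S,B,G,D\}$. The paper's own argument is no more and no less than this.
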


\section{Remarks}

The results of this article are summarised in Figure~\ref{fig:summary}.
We have demonstrated that the classes \LWPP\ and $\Delta\Ceq\P$, which may seem to be of mainly technical interest in gap-definable complexity, can be defined quite naturally in terms similar to quantum circuit families.
Specifically, they may be characterised in terms of acyclic directed networks of invertible tensors.
The same perspective also motivates a gap-definable class \LPWPP\ which provides a tighter upper bound for \EQP.

Using techniques similar to that of Theorem~\ref{thm:GLP}, one can describe a similar characterisation of the gap-definable class \WPP, in terms of a family of circuit with input-dependent gates in place of the length-dependent gate $A_n$.
These may be considered to represent a sort of input-dependent ``just-in-time'' circuit, in which an invertible-gate circuit which is computed depending on an input $x$, and then immediately used as a subroutine to solve a decision or promise problem involving $x$.
Such a family of circuits would not satisfy any uniformity condition in the sense that ``uniformity'' is usually understood, and would represent a departure from the usual approach to computational complexity; nevertheless, this is a way in which \WPP\ may also be characterised in terms of tensor networks.
The problems solvable by unitary ``just-in-time'' circuit families of this sort would then be bounded by \WPP.
(It seems likely that the algorithm of Mosca and Zalka~\cite{MZ-2003} for \mbox{DISCRETE-LOG} is an example of such a just-in-time quantum algorithm: this would depend on a careful examination of how to exactly represent the state preparations involved through algebraic coefficients.)

Other gap-definable classes can also be described using tensor networks (albeit over fields other than $\C$), also representing models of indeterministic, ``quasi-quantum'' computation~\cite{Beaudrap-2014}.
Perhaps all of counting complexity can be recast in terms of tensor-like structures over semirings, in a manner not unlike Valiant's matchgates~\cite{Valiant-2005}, but with the added intuition that the networks represent modes of indeterministic computation.

Our analysis suggests that the technical aspects of the definitions of \LWPP\ and \WPP\ (and analogously, \AWPP) merely serve to capture indeterministic computations involving algebraic numbers, represented using natural numbers.
Viewed in this light, it is not obvious that those definitions could be made simpler while still being presented in terms of gap-functions, rather than efficiently specified tensor networks.
It also demonstrates that the existing upper bounds on quantum complexity classes make no use of the unitarity of the transformations involved.
It is not obvious how a condition such as unitarity would be represented by gap-functions --- except in the same technical manner as in the traditional definitions of \LWPP, \WPP, and \AWPP.

\begin{figure}[t]
\begin{minipage}[b]{0.62\textwidth}
\small
\begin{tikzpicture}[yscale=0.75, xscale=1.5, every node/.style={draw=none, inner sep=0pt, outer sep=3pt, rectangle}]
	\node (UnitaryP) at (3,3) {\UnitaryP[\C]};
	\node	(P) at (-1,0) {\P};
	\node (EQP) at (2,2) {\EQP};
  \node (ZQP) at (2,5) {\ZQP};
  \node (BQP) at (2,7) {\BQP};
  \node	(UP)	at (-1,2) {\UP};
  \node (SPP)	at (-1,3) {\SPP};
	\node (LPWPP) at (-1,4) {\LPWPP};
	\node [anchor=north] (EQPGL) at (LPWPP.north -| 0,0) {$\EQP_{\GL}$};
	\coordinate (EQPGLw') at (LPWPP.east -| EQPGL.west);
	\draw ($(LPWPP.east) + (0,1.5pt)$) -- ($(EQPGLw') + (0,1.5pt)$);
	\draw ($(LPWPP.east) + (0,-1.5pt)$) -- ($(EQPGLw') + (0,-1.5pt)$);
  \node (LWPP) at (-1,6) {\LWPP};
	\node [anchor=north] (GLP) at (LWPP.north  -| 0,0) {$\GLP[\C]$};
	\coordinate (GLPw') at (LWPP.east -| GLP.west);
	\draw ($(LWPP.east) + (0,1.5pt)$) -- ($(GLPw') + (0,1.5pt)$);
	\draw ($(LWPP.east) + (0,-1.5pt)$) -- ($(GLPw') + (0,-1.5pt)$);
  \node (DCeqP) at (-1,9) {$\Delta \Ceq\P$};
	\node [anchor=north] (ZQPGL) at (DCeqP.north -| 0,0) {$\ZQP_{\!\!\:\GL}$};
	\coordinate (ZQPGLw') at (DCeqP.east -| ZQPGL.west);
	\draw ($(DCeqP.east) + (0,1.5pt)$) -- ($(ZQPGLw') + (0,1.5pt)$);
	\draw ($(DCeqP.east) + (0,-1.5pt)$) -- ($(ZQPGLw') + (0,-1.5pt)$);
  \node (WPP) at (-1,7) {\WPP};
  \node (coCeqP) at (-1,11) {$\co\Ceq\P$};
  \node (NQP) at (coCeqP -| 0,0) {$\NQP$};
	\coordinate (NQPw') at (coCeqP.east -| NQP.west);
	\draw ($(coCeqP.east) + (0,1.5pt)$) -- ($(NQPw') + (0,1.5pt)$);
	\draw ($(coCeqP.east) + (0,-1.5pt)$) -- ($(NQPw') + (0,-1.5pt)$);
  \node (AWPP) at (1,9) {\AWPP};
	\node (PP) at (1,11) {$\PP \mathrlap{{}=\BQP_{\GL} \!\!\:= \Post\BQP}$};
	\draw (EQP) -- (ZQP) -- (BQP);
	\draw (P) -- (UP) -- (SPP) -- (LPWPP) -- (LWPP) -- (WPP) -- (DCeqP) -- (coCeqP); 
	\draw (P) -- (EQP);
	\draw [shorten >=-4pt] (EQP) -- (UnitaryP);
	\draw (UnitaryP) -- (BQP);
	\draw (EQP) -- (EQPGL);
	\draw [shorten <=4pt] (UnitaryP) -- (GLP);
	\draw (WPP) -- (AWPP);
	\draw (ZQP) -- (ZQPGL);
	\draw (ZQP) -- (BQP);
	\draw (DCeqP) -- (PP);
	\draw (BQP) -- (AWPP);
	\draw (AWPP) -- (PP);
	\begin{pgfonlayer}{background}
		\node [fill=gray!20!white, fit=(LPWPP)(SPP)(AWPP)(PP), inner xsep=20pt, inner ysep=-26pt] {};
	\end{pgfonlayer}
\end{tikzpicture}
~
\vspace*{9ex}
\end{minipage}
\hfill
\begin{minipage}[b]{0.3675\textwidth}
\caption{%
	Elaboration of Figure~\ref{fig:containments}, including the new results of this article.
	The quasi-quantum classes $\EQP_{\GL}$, $\GLP[\C]$, and $\ZQP_{\GL}$ are presented in the second-leftmost column, and are equal to the corresp\-onding gap-definable classes to their left (by Theorems~\ref{thm:ZQPGL}, \ref{thm:GLP}, and~\ref{thm:EQPGL}).
	Thus, the best known upper bounds for the quantum classes on the right (including ${\EQP \subset \LPWPP}$ and ${\UnitaryP[\C] \subset \LWPP}$, the newly presented bounds from Propositions~\ref{prop:LPWPPunitaryBound} and~\ref{prop:UnitaryPbound}) follow from the generalisation from unitary gates to invertible gates.
	One result not shown here for reasons of space is that the class $\Post\EQP$ of problems exactly solvable by postselected quantum algorithms is equal to $\Delta\Ceq\P$ (Theorem~\ref{thm:PostEQP}).
	Thus, despite the power of postselection in the bounded-error case, and the equivalence of postselection to invertible non-unitary gates in that case, 
	we can only show $\EQP_{\GL} \subset \Post\EQP \subset \Post\BQP$, with neither containment known to hold with equality.
}
\label{fig:summary}  
\end{minipage}
\vspace*{-5.5ex}
\end{figure}
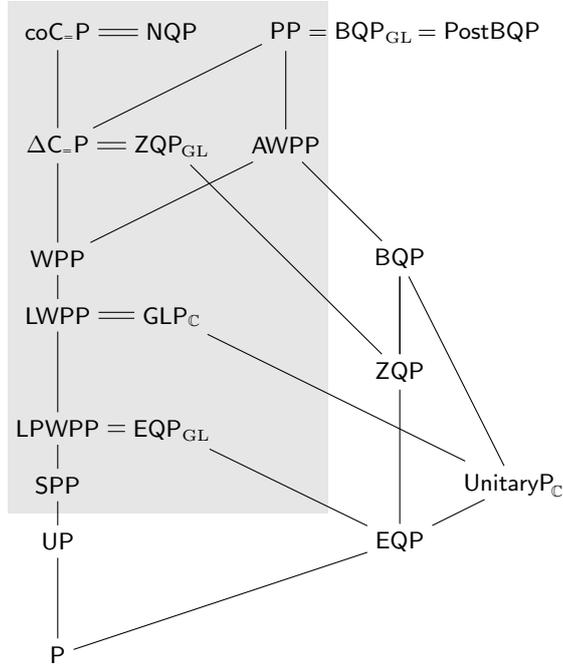

The question remains as to whether any of the bounds $\EQP \subset \LPWPP$, $\UnitaryP[\C] \subset \LWPP$, or $\ZQP \subset \Ceq\P \cap \co\Ceq\P$, may be tightened or shown to be strict.
The classes $\EQP_{\GL}$ and $\GLP[\C]$ should be expected to be much more powerful than their unitary counterparts; to reiterate the argument from the introduction, by Valiant--Vazirani~\cite{VV-1986}, if either $\UP \subset \EQP$ or $\UP \subset \UnitaryP[\C]$ it would follow that $\NP \subset \BQP$.
One could perhaps consider intermediary classes by requiring the singular values of the gates in the circuits to be nearly $1$, or similar restrictions to near-unitarity.

It is not yet known whether any of the containments $\P \subset \EQP \subset \UnitaryP[\C]$ are strict.
However, we at least see that exact \emph{quantum-like} classes can solve problems which are not expected to be in $\P$; thus 
Thus the constraint of exact decision does not in itself prevent a fruitful theory of indeterministic algorithms.
Furthermore, as non-trivial techniques will be required to simulate unitary circuits with polynomial-time coefficients using circuits built from a finite gate-set, it seems likely that $\EQP \ne \UnitaryP[\C]$.

We ask whether $\SPP$ has an interpretation as a quantum-like class in the manner of $\LWPP$ and $\Delta\Ceq\P$, and whether \SPP\ is strictly contained in $\EQP_{\GL} = \LPWPP$.
A proof that $\SPP = \LPWPP$ could be taken as evidence that $\EQP = \P$; we consider either of these possible equalities to be worthy topics of investigation.
Finally, we ask whether results similar to these also hold for other quasi-quantum models of computation, such as transformation by affine operators (which we allude to on page~\pageref{discn:affine}).
A result implicit in Ref.~\cite{BLBH-2015} is that the problems which may be solved with bounded error by polynomial-time affine circuit families over a finite gate set is the entire class \AWPP; we expect that the class of problems exactly solvable by such circuits is again \LPWPP, and that extending to polytime-specifiable gate sets again recovers \LWPP.

\vspace*{-1ex}
\paragraph{Acknowledgements.}

This work was performed in part at the CWI, with support from a Vidi grant from the Netherlands Organisation for Scientific Research (NWO) and the European Commission project QALGO.
A first draft of this work was completed while a guest of Jane Biggar, during a break from professional academia.
I would like to acknowledge support from the UK Quantum Technology Hub project NQIT.

\appendix
\section{On quantum algorithms with ``infinite'' gate sets}
\label{apx:manifesto}

It might be that $\EQP$ contains no interesting problems (or even no problems at all) outside of $\P$.
The results of this article show that exactness alone is not the cause.
The other significant constraints on the study of exact quantum algorithms is the necessary restriction to unitary operations, and the restricting to a fixed finite gate set, both of which were originally motivated by the original definition of \EQP\ in terms of quantum Turing machines~\cite{BV-1997}.

As we note on page~\pageref{discn:factoring} (and as observed by Nishimura and Ozawa~\cite{NO-2002}), a prime-factor version of the integer factoring problem is in \ZQP, so it is not obvious that these constraints are too strict.
However, in the zero-error case, the freedom to fail with constant probability allows us to exploit standard approximation results such as the Solovay--Kitaev Theorem~\cite{KSV-2002,DN2006} to simulate quantum Fourier transforms (QFTs) of arbitrary order.
Such freedom to fail is not available to exact quantum algorithms; nor can QFTs of arbitrary order (which involve algebraic number fields of arbitrarily large degree) be decomposed into a finite gate set, which can only yield coefficients from a fixed algebraic number field.
In the case of integer factorisation, this is not the only obstacle for exact quantum algorithms.
However, this does seem indicative of the central problem for exact quantum complexity: that quantum-versus-classical speedups require algorithms involving ``large'' amounts of destructive interference, and that arranging for the destructive interference to be \emph{total} (in the physicist's sense) is difficult to arrange with only a finite gate set.

The alternative is to allow infinite gate-sets --- or perhaps more appropriately, to allow the basis $\mathcal G_n$ available to a circuit $C_n$ in a uniform circuit family $\{ C_n \}_{n \ge 1}$ not to be bounded in size by a constant.
This is not \emph{a priori} unreasonable from the standpoint of computational complexity: classical $\mathsf{AC}^k$ circuits (with poly-log depth and gates with unbounded fan-in) also have length-dependent gate-sets --- albeit ones which may be exactly decomposed by virtue of the existence of a finite universal gate sets for boolean logic.
The absence of such finite universal gate sets for quantum computation need not prevent us from considering length-dependent gate-sets; indeed, one might suppose that this absence \emph{forces} the issue of such gate-sets on us, as we cannot rule them out without loss of generality in the case of exact algorithms.

A consequence of considering such gate-sets is a break from the exact correspondence between quantum Turing machines and quantum circuit families.
However, insisting on such a correspondence introduces a distinction in quantum complexity between ``algorithms'', and ``meta-algorithms''~\cite[Appendix~C]{Beaudrap-2014} consisting of efficiently computable \emph{descriptions} of other computations.
The standard theory of universal Turing machines prevents such a distinction in classical computational complexity; the absence of an exact (and efficient\footnote{%
	There do exist exact universal quantum Turing machines, if one is willing to dispense with efficient simulation.
	For example, any deterministic universal Turing machine suffices.
})
universal quantum Turing machine is what makes an algorithm versus meta-algorithm distinction possible in the quantum regime.
However, it is not clear that it is productive or meaningful to make such a distinction, especially as unitary circuit families and adiabatic quantum computations are both descriptions of computations whose parameters are computed by a Turing machine.

If one were to consider circuit families, which involve gate sets which scale with the circuit size, one must consider how to do so without trivialising the theory of quantum complexity, for either exact algorithms or for bounded-error algorithms.
It seems plausible that restricting them to be polynomial-time specifiable, so that the entire tensor network of the each circuit $C_n$ can be specified in time $O(\poly n)$, is a reasonable first step.
In order to avoid subsuming a non-trivial amount of work into any one gate --- \eg~polylogarithmic factors, due to sophisticated unitary operators acting on $O(\log n)$-qubits --- one may impose further restrictions.
Given that we require each gate in the circuit to be explicitly expressible in polynomial time, it seems likely that standard exact circuit decomposition techniques~\cite{NC-2000} would allow us to restrict to gates acting on $O(1)$ qubits without loss of generality.
Furthermore, to represent the cost of calibration of (abstract) devices which control the unitary evolution of the computational state, and also to account for the work performed by a straightforward evaluation of amplitudes through the gap-function of an NTM, it seems reasonable to impose a non-trivial cost on each gate, which depends on the time required to compute its description.
(For a gate-set of finite size, this cost would be bounded by a constant, recovering the usual notion of circuit size as the cost of the circuit.)
Ref.~\cite[\S3.3\,C]{Beaudrap-2014} sketches how circuits involving such gates may be simulated with bounded error by constant-size gates; and Proposition~\ref{prop:UnitaryPbound} of this article sketches how the complexity of algorithms using such gate-sets can be bounded by counting classes which are well-known as upper bounds for exact quantum complexity.
Thus, such an approach seems likely to leave the existing theory of bounded-error quantum computation largely unchanged, while providing more freedom in the study of exact quantum algorithms.

It seems plausible that there would be problems of interest in $\UnitaryP[\C]$ which are not expected to be in \P, or even \BPP.
The usual approach to decomposing the QFT over the integers modulo $2^n$ taught in many introductory courses on quantum algorithms~\cite{Coppersmith-1994}, is an example of a uniform circuit family over a polytime-specifiable gate-set.
It seems plausible that, by some modification of the approach of Mosca and Zalka~\cite{MZ-2003}, some version of \mbox{DISCRETE-LOG} might be contained in the class \UnitaryP[\C] of problems which are exactly decidable by such unitary circuits (though this may involve technical arguments on the number of $O(n)$-bit primes for which \mbox{DISCRETE-LOG} is expected to be hard).
As we show in the main text, \UnitaryP[\C] is bounded above both by \LWPP\ (until recently the best known bound on \EQP) and \BQP, which we consider evidence that this approach to quantum computational complexity is not extravagant.
We therefore propose the study of the class \UnitaryP[\C] (or some similar notion of exact quantum complexity) as a way to approach the subject of exact quantum algorithms.


\begin{thebibliography}{50}

\bibitem{Aaronson-2005}		S.~Aaronson.
\newblock			\emph{Quantum computing, postselection, and probabilistic polynomial-time}.
\newblock			Proc.\ Roy.\ Soc.~\textbf{A 461} (3473--3482), 2005.
\newblock			arXiv:quant-ph/0412187.

\bibitem{AA-2009} S.~Aaronson, A.~Ambainis.
\newblock	\emph{The Need for Structure in Quantum Speedups}.
\newblock	Proc.\ ICS~2011 (pp.~338--352), 2011.
\newblock	arXiv:0911.0996.

\bibitem{AA-2011} S.~Aaronson and A.~Archipov.
\newblock	\emph{The Computational Complexity of Linear Optics}.
\newblock	Proceedings of the 43rd Annual STOC (pp.~333--342), 2011.
\newblock arXiv:1011.3245. 

\bibitem{ADH-1997} L.~Adleman, J.~DeMarrais, and M.~Huang.
\newblock	\emph{Quantum computability}.
\newblock SIAM Journal on Computing~\textbf{26} (pp.~1524--1540), 1997. 
 
\bibitem{Ambainis-2013} A.~Ambainis.
\newblock	\emph{Superlinear Advantage for Exact Quantum Algorithms}.
\newblock In Proceedings of the 45th Annual STOC (pp.~891--900), 2013.
\newblock	arXiv:1211.0721.

\bibitem{AIS-2013} A.~Ambainis, J.~Iraids, and J.~Smotrovs.
\newblock	\emph{Exact quantum query complexity of {EXACT} and {THRESHOLD}}.
\newblock arXiv:1302.1235.

\bibitem{AK-2002} V.~Arvind and P.~P.~Kurur.
\newblock	\emph{Graph isomorphism is in SPP}.
\newblock Proceedings of the IEEE 42nd annual FOCS (pp.~743--750), 2002.

\bibitem{BLBH-2015} J.~Barrett, C.~M.~Lee, N.~de~Beaudrap, and M.~J.~Hoban.
\newblock	\emph{The computational landscape of generalised probabilistic theories}
\newblock	(to appear).

\bibitem{Beaudrap-2014} N.~de~Beaudrap.
\newblock \emph{On computation with `probabilities' modulo k}.
\newblock	arXiv:1405.7381, 2014.

\bibitem{dBOE-2010} N.~de~Beaudrap, T.~J. Osborne, and J.~Eisert.
\newblock \emph{Ground states of unfrustrated spin hamiltonians satisfy an area law}.
\newblock New Journal of Physics~\textbf{12} (095007), 2010.
\newblock	arXiv:1009.3051.

\bibitem{BGH-1990} R.~Beigel, J.~Gill, U.~Hertrampf.
\newblock \emph{Counting classes: Thresholds, parity, mods, and fewness}.
\newblock Proc. 7th Annual STACS, Lecture Notes in Computer Science~\textbf{415} (pp.~49--57), 1990.

\bibitem{Beigi-2010} S.~Beigi.
\newblock \emph{Entanglement-assisted zero-error capacity is upper-bounded by the Lov\'asz $\ensuremath{\vartheta}$ function}.
\newblock	Physical Review A~\textbf{82} (010303), 2010.
\newblock arXiv:1002.2488.

\bibitem{BV-1997} E.~Bernstein, U.~Vazirani.
\newblock	\emph{Quantum Complexity Theory}.
\newblock SIAM Journal on Computing~\textbf{26} (pp.~1411-1473), 1997.

\bibitem{Bravyi-2006} S.~Bravyi.
\newblock \emph{Efficient algorithm for a quantum analogue of 2-{SAT}}.
\newblock Preprint, quant-ph/0602108v1, 2006.

\bibitem{BW-2003} H.~Buhrman and R.~de~Wolf.
\newblock	\emph{Quantum Zero-Error Algorithms Cannot be Composed}.
\newblock	Information Processing Letters~\textbf{87} (pp.~79--84), 2003.
\newblock	arXiv:quant-ph/0211029. 

\bibitem{CCDJZ-2011} J.~Chen, X.~Chen, R.~Duan, Z.~Ji, and B.~Zeng.
\newblock \emph{No-go theorem for one-way quantum computing on naturally occurring two-level systems.}
\newblock Physical Review~A~\textbf{83} (050301), 2011.
\newblock arXiv:1004.3787.

\bibitem{Coppersmith-1994} D.~Coppersmith.
\newblock	\emph{An approximate Fourier transform useful in quantum factoring.}
\newblock	Technical Report RC19642, IBM, 1994.
\newblock	arXiv:quant-ph/0201067.

\bibitem{CLMW-2010}	T.~S.~Cubitt, D.~Leung, W.~Matthews, and A.~Winter.
\newblock	\emph{Improving Zero-Error Classical Communication with Entanglement}.
\newblock	Physical Review Letters~\textbf{104} (230503), 2010.
\newblock arXiv:0911.5300.

\bibitem{DN2006}	C.~M.~Dawson and M.~A.~Nielsen.
\newblock					\emph{The {Solovay}--{Kitaev} algorithm}.
\newblock					Quantum Information and Computation \textbf{6} (pp.~81--95), 2006.
\newblock					arXiv:quant-ph/0505030

\bibitem{Fenner-2003} S.~A.~Fenner.
\newblock	\emph{{PP}-{L}owness and a Simple Definition of {AWPP}.}
\newblock	Theory of Computing Systems~\textbf{36} (pp.~199--212), 2003.

\bibitem{FFK-1994} S.~Fenner, L.~Fortnow, and S.~Kurtz.
\newblock	\emph{Gap-definable counting classes.}
\newblock	Journal of Computer and System Sciences~\textbf{48} (pp.~116--148), 1994.

\bibitem{FFKL-1993} S.~Fenner, L.~Fortnow, S.~Kurtz, and L.~Li.
\newblock	\emph{An oracle builder's toolkit.}
\newblock	In Proceedings of the 8th IEEE Structure in Complexity Theory Conference (pp.~120--131), New York, 1993.

\bibitem{FGHP-1999}		S.~Fenner, F.~Green, S.~Homer, R.~Pruim (1999).
\newblock			\emph{Determining Acceptance Possibility for a Quantum Computation is Hard for the Polynomial Hierarchy}.
\newblock			Proc.\ Roy.\ Soc.~\textbf{A 455} (3953--3966).
\newblock			arXiv:quant-ph/9812056.

\bibitem{FR-1999} L.~Fortnow and J.~Rogers.
\newblock	\emph{Complexity Limitations on Quantum Computation}.
\newblock Journal of Computer and System Sciences~\textbf{59} (pp.~240--252), 1999.
\newblock arXiv:cs/9811023.

\bibitem{KKR-2006} J.~Kempe, A.~Kitaev, and O.~Regev.
\newblock	\emph{The Complexity of the Local Hamiltonian Problem}.
\newblock	SIAM Journal on Computing~\textbf{35} (pp~1070--1097), 2006.
\newblock	arXiv:quant-ph/0406180.

\bibitem{KSV-2002} A.~Y.~Kitaev, A.~Shen, and M.~N.~Vyalyi.
\newblock	\emph{Classical and Quantum Computation}.
\newblock Graduate Studies in Mathematics, American Mathematical Society, 2002.

\bibitem{MdA-2005} R.~A.~C.~Medeiros and F.~M.~de~Assis.
\newblock	\emph{Quantum zero-error capacity}.
\newblock	International Journal of Quantum Information~\textbf{3} (pp.~135--139), 2005.
\newblock	arXiv:quant-ph/0611089

\bibitem{MJM-2015} A.~Montanaro, R.~Jozsa, and G.~Mitchison.
\newblock	\emph{On Exact Quantum Query Complexity}.
\newblock	Algorithmica~\textbf{71} (pp.~775--796), 2015.
\newblock arXiv:1111.0475.

\bibitem{MZ-2003} M.~Mosca and C.~Zalka.
\newblock	\emph{Exact quantum Fourier transforms and discrete logarithm algorithms}.
\newblock	arXiv:quant-ph/0301093

\bibitem{NC-2000} M.~A. Nielsen and I.~L. Chuang.
\newblock \emph{Quantum Computation and Quantum Information}.
\newblock Cambridge University Press, 2000.

\bibitem{NO-2002} H.~Nishimura and M.~Ozawa.
\newblock	\emph{Computational Complexity of Uniform Quantum Circuit Families and Quantum Turing Machines}.
\newblock	Theoretical Computer Science~\textbf{276} (pp.~147--181), 2002.
\newblock arXiv:quant-ph/9906095.

\bibitem{Nishino-2002} T.~Nishino.
\newblock	\emph{Mathematical models of quantum computation}.
\newblock	New Generation Computing~\textbf{20} (pp.~317--337), 2002.

\bibitem{STT-2005} H.~Spakowski, M.~Thakur, and R.~Tripathi.
\newblock	\emph{Quantum and classical complexity classes: Separations, collapses, and closure properties}.
\newblock	Information and Computation~\textbf{200} (pp.~1--34), 2005.

\bibitem{ST-2006} H.~Spakowski and R.~Tripathi.
\newblock	\emph{LWPP and WPP Are Not Uniformly Gap-definable}.
\newblock Journal of Computer and System Sciences~\textbf{72} (pp.~660-689), 2006.

\bibitem{TO-1992} S.~Toda and M.~Ogiwara.
\newblock	\emph{Counting Classes Are at Least As Hard As the Polynomial-time Hierarchy}.
\newblock SIAM Journal on Computing~\textbf{21} (pp.~316--328), 1992.

\bibitem{Valiant-2005} L.~G.~Valiant.
\newblock \emph{Holographic Circuits}.
\newblock In Proceedings of the 32nd annual ICALP (pp.~1--15), 2005.

\bibitem{VV-1986} L.~G.~Valiant, V.~V.~Vazirani.
\newblock	\emph{NP is as easy as detecting unique solutions}.
\newblock Theoretical Computer Science \textbf{47} (pp.~85--93), 1986.


\end{thebibliography}
\end{document}